\newtheorem{theorem}{Theorem}
\newtheorem{lemma}{Lemma}
\newtheorem{definition}{Definition}
\begin{document}
%
\title{Energy Efficient and Fair Resource Allocation for LTE-Unlicensed Uplink Networks: A Two-sided Matching Approach with Partial Information}

\author{Yuan Gao\textsuperscript{1}, Haonan Hu\textsuperscript{1},Yue Wu\textsuperscript{2*}, Xiaoli Chu\textsuperscript{1} and Jie Zhang\textsuperscript{1}
\thanks{This paper was supported in part by the Fundamental Research Funds for the Central Universities under Grant 222201814046; by the National Natural Science Foundation of China under Grant 61501187.}
\thanks{Yuan Gao, Haonan Hu, Xiaoli Chu and Jie Zhang are with the Department of Electronic and Electrical Engineering, The University of Sheffield, UK, e-mail: \{ygao41, haonan.hu, x.chu, jie.zhang\}@sheffield.ac.uk.}
\thanks{Yue Wu is with the School of Information Science and Engineering, East China University of Science and Technology, China, e-mail: yuewu@ecust.edu.cn.}
\thanks{\textsuperscript{*}The Email of the corresponding author is: yuewu@ecust.edu.cn}
}


%



\maketitle

\begin{abstract}
  LTE-Unlicensed (LTE-U) has recently attracted worldwide interest to meet the explosion in cellular traffic data. By using carrier aggregation (CA), licensed and unlicensed bands are integrated to enhance transmission capacity while maintaining reliable and predictable performance. As there may exist other conventional unlicensed band users, such as Wi-Fi users, LTE-U users have to share the same unlicensed bands with them. Thus, an optimized resource allocation scheme to ensure the fairness between LTE-U users and conventional unlicensed band users is critical for the deployment of LTE-U networks. In this paper, we investigate an energy efficient resource allocation problem in LTE-U coexisting with other wireless networks, which aims at guaranteeing fairness among the users of different radio access networks (RANs). We formulate the problem as a multi-objective optimization problem and propose a semi-distributed matching framework with a partial information-based algorithm to solve it. We demonstrate our contributions with simulations in which various network densities and traffic load levels are considered.
\end{abstract}

\begin{IEEEkeywords}
LTE-Unlicensed, multi-objective optimization, one-to-many matching, incomplete preference list, matching theory.
\end{IEEEkeywords}

%
\IEEEpeerreviewmaketitle

\section{Introduction}
 1000x data requirement is a major challenge for cellular networks in 5G networks \cite{bleicher2013surge}.
 To overcome the challenge, exploiting more spectrums for reliable communication
 is regarded as a promising solution. Industrial scientific and medical (ISM) radio bands, in particular, 5.8 GHz have attracted wide interest \cite{chen2017coexistence}. The overall available
 spectrum bandwidth in the unlicensed bands in major markets (e.g. US, Europe, China, Japan)
 is several hundred megahertz (MHz)\cite{chen2017coexistence}.

LTE-unlicensed (LTE-U) is deployed to allow cellular user equipment (UE) to utilize ISM radio bands, in particular, 5.8 GHz. To enhance system capacity, unlicensed carriers are integrated
into a cellular network by using the carrier aggregation (CA). The CA enables the aggregation
of two or more component carriers into a combined bandwidth with one carrier serving as the Primary Component Carrier (PCC)
and others serving as Secondary Component Carriers (SCCs) \cite{wang2016evolution,lee2014recent,ku2015resource}. For LTE-U, licensed
carrier serves as the PCC, while the unlicensed bands work as
the SCCs in Time-Division-Duplexed (TDD) or Supplemental
DL (SDL) only \cite{chen2017coexistence}. Furthermore, in \cite{yued2dlteu}, the authors proposed a mechanism that allowed device-to-device (D2D) communications operating in unlicensed bands utilizing LTE-U technologies.

Wi-Fi networks, with low cost and high data rates, have been
the dominant players on all unlicensed bands in 2.4 and 5 GHz. However, spectrum
efficiency in Wi-Fi systems is low, especially given the
overloaded conditions. In contrast, LTE works more efficiently in terms of
resource management and error control. Therefore, the deployment of LTE-U
not only alleviates the spectrum scarcity of the cellular system, but also improves
the spectrum efficiency on the unlicensed bands.
\subsection{Challenges of Deploying LTE-U}
Despite the huge potential to meet cellular traffic surges,
LTE-U is still in its infancy; several deployment challenges remain to be overcome.
First, Wi-Fi systems would experience significant performance
degradation in the presence of LTE-U systems without a proper coexistence scheme \cite{babaei2015impact,rupasinghe2014licensed}.
Wi-Fi systems employ carrier sense multiple access with collision avoidance (CSMA/CA) to access the unlicensed bands, and a Wi-Fi user will back off if the
co-channel LTE-U signals is above the energy detection threshold (e.g., -62dBm over 20MHz) \cite{alcatel2016verizon}.
Therefore, a suitable coexistence mechanism is required in the LTE-U channel access scheme design.
Secondly, LTE-U users may fail to meet its quality of service (QoS) requirement due to Wi-Fi transmission. What's more,
the interference between LTE-U users of multiple operators would also lead to
performance degradation of LTE-U users.
Such unplanned and unmanaged deployment would result in severe
performance degradation for both Wi-Fi and LTE-U networks and poor spectrum efficiency. LTE-U
calls for coexistence schemes to enable harmonious resource sharing between Wi-Fi and LTE-U.

Thus, coexistence mechanisms have attracted substantial interest. Fair spectrum sharing between Wi-Fi and LTE-U can be ensured by using either non-coordinated or coordinated network managements. Non-coordinated schemes, such as LTE blank subframe allocation \cite{zhang2015coexistence}, listen-before talk (LBT) scheme \cite{chen2015downlink}, the carrier sense adaptive transmission (CSAT) by LTE-U forum \cite{alcatel2016verizon}, and 3 LBT schemes (Category (Cat) 2, 3, 4) by European Telecommunications Standards Institute (ETSI) \cite{ETSI1.7.1}, require modifications on the LTE-U side only, while coordinated schemes require information sharing about network operations and spectral resources using centralized network interconnections, including cooperative control for spectrum access and managing coexistence using an X2 interface \cite{al20155g}.

Research on the optimal resource allocation of the unlicensed spectrum has also been undertaken. Geometric programming \cite{chiang2005geometric} has been widely used in wireless communication to solve network resource allocation problems, which has been often used in LTE-U scenarios. In \cite{chen2016cellular}, the optimization performance of a hybrid method to perform both traffic offloading and resource sharing based on a duty cycle scheme is revealed. A fair-LBT (F-LBT) scheme is proposed by considering both the throughput and fairness of an LTE-U and a Wi-Fi system \cite{ko2016fair}. In \cite{gu2015exploiting}, a matching-based student-project model is developed to guarantee unlicensed users¡¯ QoS, together with the system-wide stability. Contention window size for both Wi-Fi and LTE-U users are jointly adapted to maximize LTE-U throughput
while guaranteeing the Wi-Fi throughput threshold \cite{gao}. In \cite{hamidouche2016multi}, power allocation problem of the small base stations is formulated as a non-cooperative game by using a multi-framework. Fair proportional allocation is developed to optimize both Wi-Fi and LTE-U throughput \cite{cano2015coexistence}. A centralized joint power optimization and joint time division channel access optimization scheme is proposed to achieve significant gains for both Wi-Fi and LTE-U throughput \cite{sagari2015coordinated}. A Nash bargaining game theoretic framework is also employed to solve the joint channel and power allocation problem in \cite{ni2012nash}. In \cite{chen2016optimizing}, the unlicensed spectrum is divided into a contention period, for Wi-Fi users only, and a contention-free period, for LTE-U users. The optimal contention period is obtained by using the Nash bargaining solution. In \cite{aliresource}, a joint user association and power allocation for licensed and unlicensed spectrum algorithm is proposed to maximize maximize sum rate of LTE-U/Wi-Fi heterogeneous networks.

Fair coexistence has not been defined clearly, and one of the definitions is that
the deployment of an LTE-U system should not affect one Wi-Fi system more than another
Wi-Fi system with respect to throughput and latency \cite{nielsen2014lte,chen2017coexistence}. Throughput fairness is explored by means of both $\alpha$-fairness and max-min approach and time division access and channel sharing between Wi-Fi and LTE-U are found to be effective coexistence schemes. Moreover, a criterion for switching between these two schemes is also established in \cite{garnaev2016fair}, subject to different network scenarios. We hold that a fair
coexistence should consider both Wi-Fi and LTE-U users' QoS, such as throughput
threshold and power consumption. Due to the limitations of power in end-user devices,
if a user's throughput requirement were fulfilled by consuming an excessive amount
of power, user's satisfaction would be affected. The ratio of the achievable
user throughput to the consumed user energy, i.e., energy efficiency (EE), is an important indicator for wireless communications especially
from a user's perspective, which has been widely explored in a 5G ultra-dense networks \cite{yang2017interference}, cognitive radio \cite{feng2013survey}, and OFDMA networks \cite{li2011energy}. Therefore, it is interesting and critical to study the EE minimization problem in Wi-Fi and LTE-U coexistence scenarios while meeting their QoS requirements.
\subsection{Matching Theory Framework}

Matching theory is a mathematical framework for forming mutually
beneficial relations, which was first applied in economics. It can be easily adapted to study resource allocation problems of a wireless communication system.
\begin{itemize}
\item Matching theory can model the interactions between two distinct sets of players
with different or even conflicting interests \cite{gu2015matching}. For example, in an LTE uplink network,
UE aims to achieve its QoS (mainly throughput) with minimal energy consumption while the objectives of small cell base stations (SCBSs)
are serving users with certain QoS requirements and maximizing its capacity.
\item Compared with game theory, a UE does not need other UEs' actions to make decisions. A preference list in terms of performance matrix, such as throughput and EE, is set up
based on the local information including channel conditions. UEs made proposals according to this list.
The only global information required from a centralized agent is the rejection/acceptance decision of each UE's proposal
and blocking pair.
\end{itemize}

Recently, matching theory has emerged as a promising tool to cope with future wireless
resource allocation problems. In a full duplex OFDMA network, UL and DL user pairing and
sub-channel allocations are modelled as a one-to-one three-sided matching to maximize the sum
system rate \cite{di2016joint}. In \cite{sekander2017decoupled}, the decoupled uplink-downlink user association problem in
multi-tier full-duplex cellular networks is formulated as two-sided many-to-one
matching. An algorithm, based on a stable marriage algorithm is
developed to find a near optimum with much lower complexity compared to a conventional
coupled and decoupled user association scheme. A resource allocation
problem for device-to-device (D2D) communications underlaying cellular networks is studied in \cite{zhao2017many};
a two-sided many-to-many matching scheme with an externality is proposed to find the sub-optimality.
A matching based algorithm to study the resource allocation problem in an LTE-U scenario is proposed
in \cite{gu2015exploiting}. The student-project model is used, in which students (cellular users)
propose projects (unlicensed bands), and the decisions are made by lectures (base stations) to
achieve maximal system (both LTE-U and Wi-Fi) throughput.
Based on this paper, the same goal is studied by considering user mobility in \cite{gu2017dynamic}.
However, all of the above work considers optimal system performance as a whole,
instead of QoS (such as throughput) for each user. In addition, another limitation of
the above works is that the matching is with complete preference lists. This is not
always the case in the real world, for example, some bands may fail to achieve a user's
QoS requirement, due to its availability and channel variation, which means that
some bands are not acceptable to certain users, making the preference list incomplete.

\subsection{Contributions}
\label{sec:contribution}

The major contributions of this paper are summarized as follows:

\begin{figure*}[htbp]
\centering\includegraphics[width=0.9\textwidth]{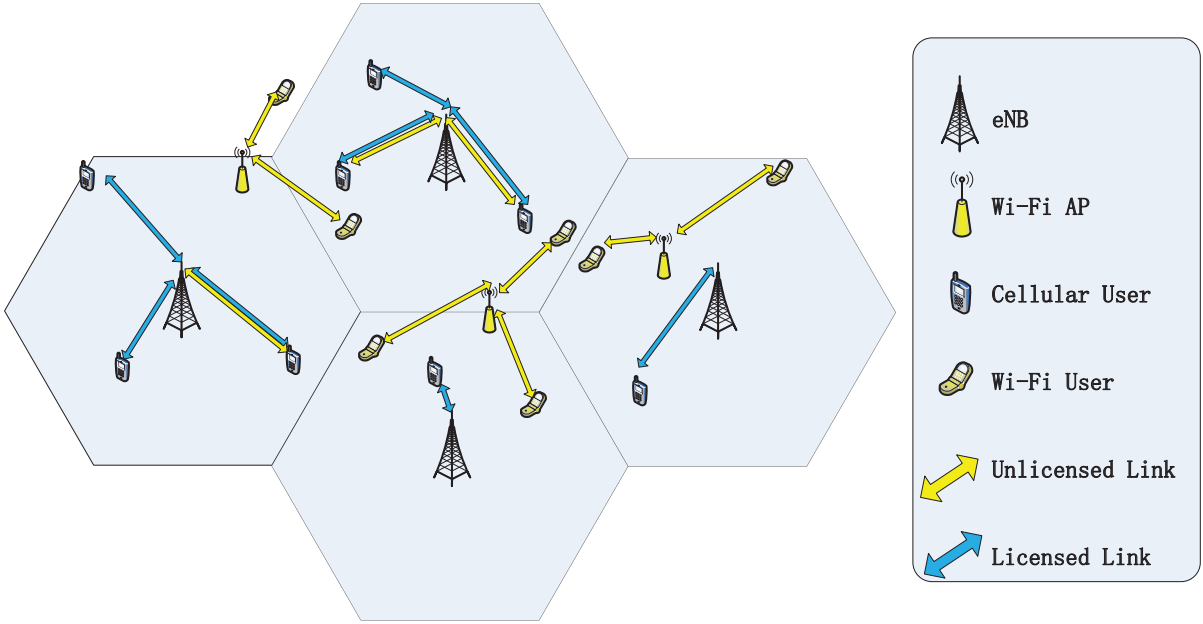}
\caption{System architecture of a LTE-U and Wi-Fi system}
\label{Scenario}
\end{figure*}
\begin{itemize}
\item Different from existing works, which typically consider only the fairness problem or overall EE (defined as the ratio of the overall data rate and the total energy consumption), we propose an optimized shared scheme for LTE-U networks coexisting with Wi-Fi in ISM bands, which aims at maximizing the EE of independent LTE-U users while guaranteeing fairness among different users. That is, the proposed algorithm would guarantee the QoS requirement for each user (including CUs and Wi-Fi users).
\item The optimization problem is formulated  as a \emph{multi-objective optimization problem}, in which typically a set of Pareto solutions can be achieved. We utilize the weighted sum method to transform the multi-objective optimization problem into a \emph{single-objective optimization problem} and find the Pareto optimal solution.
\item The single-objective optimization problem can be further modelled as a one-to-many matching game with partial information. Here \emph{partial information} means \emph{incomplete preference lists}, which is due to the fact that some UBs fail to fulfil a user's minimal throughput requirement and are not acceptable to that user. Such problem has not yet been solved. We propose a semi-distributed two-step stable algorithm to solve it. Numerical results demonstrate that the proposed algorithm can achieve good performance with fast convergence speed.
\end{itemize}

The rest of the paper is organized as follows. The system model is described in Section \uppercase\expandafter{\romannumeral3}. The problem formulation from a multi-objective optimization to a single-objective formulation is developed in \uppercase\expandafter{\romannumeral4}. To solve the optimization problem, a two-step matching-based resource allocation and user association algorithm are proposed in Section \uppercase\expandafter{\romannumeral5}. In Section \uppercase\expandafter{\romannumeral6}, numerical results are presented and analysed. Section \uppercase\expandafter{\romannumeral7} concludes the paper.
\section{System Model}
As shown in Fig. \ref{Scenario}, we consider a LTE-U network coexisting with a Wi-Fi network in ISM bands (e.g. 2.4 and 5.8 GHz), composed of $M$ independently uniformly distributed small-cell base stations (SCBSs), $SCBS=\{SCBS_1,...SCBS_m,...SCBS_M\}$, and $N$ independently uniformly distributed Wi-Fi access points (APs), $AP=\{AP_1,...AP_n,...AP_N\}$. All the SCBSs are deployed by the same cellular network operator. $K$ cellular users (CUs), $CU=\{CU_1,...CU_k,...CU_K\}$ and $N'$ Wi-Fi users (WU), $WU=\{WU_1,...WU_{n'},...WU_{N'}\}$  are uniformly distributed in the area of interest.

As shown in Fig. \ref{ABS}, the whole unlicensed spectrum is divided into $U$ orthogonal UBs. Then in the time domain, each UB is divided
into slots; the period of a slot is $T$. Each slot consists of several sub-frames, the duration of a subframe is $t$, which is
smaller than the coherence time of the signal channel. Thus, during the transmission period of a sub-frame, the power attenuation caused by Rayleigh fading in each link can be regarded as a fixed parameter. Moreover, each sub-frame is considered strictly independent.
\begin{figure}[htbp]
\centering\includegraphics[width=0.52\textwidth]{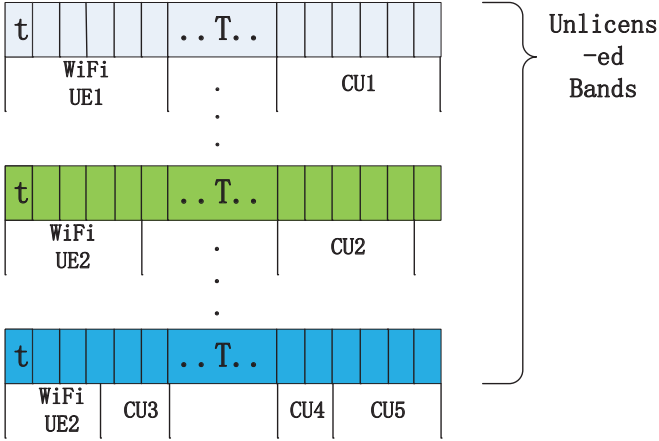}
\caption{TDD sharing of unlicensed bands between Wi-Fi and LTE-U users}
\label{ABS}
\end{figure}

WUs communicate with Wi-Fi APs under a standard carrier sense multiple access protocol with collision avoidance (CSMA/CA). CUs are served by SCBSs by using a licensed band for both uplink and downlink transmission, while they seek to aggregate unlicensed bands for a supplementary uplink transmission.

\renewcommand\arraystretch{1.4}
\begin{table}
  \begin{center}
  \caption{General Notation}
    \label{General_Notation}
    \begin{tabular}{ | l| l |}
   \hline
$SCBS_m$         & the $m$th small cell base station \\   \hline
$AP_n$           & the $n$th access point\\   \hline
$CU_k$           & the $k$th cellular user\\   \hline
$UB_u$           & the $u$th unlicensed band  \\   \hline
$T$    & slot time\\   \hline
$t$              & sub-frame time\\   \hline
$l_u$            & the fraction of time LTE-U is muting on $UB_u$ \\   \hline
\multirow{2}{*}{$C^C_{k,m,u}$}    &the uplink capacity $CU_k$ associating with $SCBS_m$ \\&on unlicensed band $UB_u$\\   \hline
\multirow{2}{*}{$I_{k,m,u}$}     & the number of sub-frames in $UB_U$ allocated to $CU_k$\\& served by $SCBS_m$\\   \hline
$C_{k,m,u,i}$    &the achievable data rate of $CU_k$ served by$SCBS_m$\\   \hline
$\chi_{k,m,u}$   & equals 1 if $CU_k$ is served by $SCBS_m$ using $ UB_u$ \\   \hline
$P_{k,m}^{CU}$   & transmission power from $CU_k$ to $SCBS_m$\\   \hline
\multirow{2}{*}{$g_{k,m,u}$}      & channel power gain between $CU_k$ and $SCBS_m$ \\&on $UB_u$\\   \hline
\multirow{2}{*}{$R_{k,m,u}$ }     & the uplink throughput of $CU_k$ served \\&by $SCBS_m$ on $UB_u$\\   \hline
$\sigma_{N}^{2}$ &the thermal noise\\   \hline
$WU_u$           & Wi-Fi users on $UB_U$\\   \hline
$R^W_u$          &throughput requirement of $WU_u$ \\   \hline
${PE}_k^{CU}$    &energy efficiency of $CU_k$\\   \hline
$R_k^{L}$        & Throughput requirement of $CU_k$\\   \hline

\hline
\end{tabular}
\end{center}
\end{table}

A CU can access its local SCBS for uplink transmission with one of $U$ UBs. We consider LTE-U using a duty cycle scheme to manage the coexistence in the unlicensed spectrum in the time domain. By using this duty cycle method, CUs will use a almost blank subframe (ABS) pattern
\cite{zhang2015coexistence} to guarantee Wi-Fi QoS by muting a fraction of time for $UB_u$. The fraction $l_u$ will be adaptively adjusted based on the Wi-Fi data requirement. Here, we consider the static synchronous muting pattern.

The notations in this paper can be found in Table \ref{General_Notation}.

%
\subsection{LTE-U Throughput}
During the transmission slot of LTE-U, we denote the uplink capacity  $C_{k,m,u}^{C}$ of $k$-th CU $CU_k$ associating with $SCBS_m$ on unlicensed band $UB_u$. Thus, the uplink throughput on $UB_u$ is given by:
\begin{small}
\begin{equation}
R_{k,m,u}^{CU}=\sum_{i=1}^{I_{k,m,u}}C_{k,m,u,i}^{CU},
\end{equation}
\end{small}

where $I_{k,m,u}$ is the number of sub-frames in $UB_U$ allocated to $CU_k$ served by $SCBS_m$. $C_{k,m,u,i}$ is the achievable data rate of $CU_k$ served by $SCBS_m$
the $u$-th sub-frame of $UB_u$, given as:
\begin{small}
\begin{equation}
C_{k,m,u,i}^{CU}= {t_i}B_{u}log_{2}(1+\frac{\chi_{k,m,u}P_{k,m}^{CU}g_{k,m,u}}{\sigma_{N}^{2}+\sum_{j\neq k}^{K}\sum_{m}^{M}\rho_{j,m,u}P_{j,m}^{CU}g_{j,m,u}})
\end{equation}
\end{small}where, $\chi_{k,m,u}$ is an indicator function, defined as:
\begin{small}
\begin{equation}
\chi_{k,m,u}=\begin{cases}
 & \text{1, if } CU_k \text{ is served by } SCBS_m \text{ using } UB_u,\\
 & \text{0, otherwise.}
\end{cases}
\end{equation}
\end{small}$P_{k,m}^{CU}$ represents the transmission power from $CU_k$ to $SCBS_m$. $g_{k,m,u}$ is the channel power gain between $CU_k$ and $SCBS_m$ on $UB_u$, and $g_{j,m,u}$ is the channel gain between $CU_j$ and $SCBS_m$ on $UB_u$. $\sigma_{N}^{2}$ is the thermal noise.
\subsection{Wi-Fi Throughput}
For each WU $WU_{n'}$, there is equal probability of accessing one of the unlicensed bands. We regard the WUs sharing the same UB as one WU, the interactions between co-channel CUs and WUs can be simplified to the interactions between co-channel CUs and a WU\cite{gu2015exploiting,gu2017dynamic}. The WU that occupies $UB_u$ is denoted as $WU_u$. Thus, the throughput of $Th_u$ can be expressed by \cite{bianchi2000performance}:
\begin{small}
\begin{equation}
Th_u=\frac{\overline{E(p)}P_{tr}^{u}P_{s}^{u}}{(1-P_{tr}^{u})\delta+P_{tr}^{u}P_{s}^{u}{T_{s}}+P_{tr}^{u}(1-P_{s}^{u}){T_{c}}},
\end{equation}
\end{small}where $\overline{E(p)}$ is the average packet size of Wi-Fi transmission, $P_{tr}^{u}$ is the probability that $UB_u$ is occupied, and $P_{s}^{u}$ is the successful transmission probability in $UB_u$. $\delta$ is the slot time defined in 802.11. ${T_{s}}$ and ${T_{c}}$ are the average time consumed by a successful transmission and a collision in $UB_u$, respectively.

Based on the ABS scheme, the fraction of time slots $l_u$ of $UB_u$ will be allocated to the $WU_u$ using $UB_u$. To guarantee throughput requirement $R^W_u$ of $WU_u$, $l_u$ is given as£º

\begin{small}
\begin{equation}
Th_ul_{u}T\geq R_u^W.
\end{equation}
\end{small}
\section{Problem Formulation}
We define the EE of $CU_k$, i.e., the throughput of $CU_k$ obtained per unit power consumption with the unit of '$bits-per_Joule$' \cite{feng2013survey} as follows:
\begin{equation}
  {PE}_k^{CU}=\frac{\sum_m^M\sum_u^U\chi_{k,m,u}R_{k,m,u}}{{\sum_m^M\sum_u^U\chi_{k,m,u}I_{k,m,u}P_{k,m}^{CU}}}
\end{equation}
We formulate the following EE maximization problem for each CU as a multi-objective optimization problem:
\begin{subequations}
\begin{align}
&{min}(-{PE}_{1}^{CU},...,-{PE}_{K}^{CU})\tag{7}\label{eq:2},\\
s.t \nonumber\\
&\sum_{k}^{K}\sum_{u}^{U}\chi_{k,m,u}\leq 1,\, m \in \{1, ..., M\} \label{eq:3a},\\
&\sum_{m}^{M}\sum_{u}^{U}\chi_{k,m,u} I_{k,m,u} t\leq T{l_u},\,\,\,\, k \in \{1, ..., K\} \label{eq:3b},\\
&\chi_{k,m,u}\in \left \{0,1 \right \} ,\,k \in \{1, ..., K\}, m \in \{1, ..., M\}, \nonumber\\&\,\,\,\,\,\,\,\,\,\,\,\,\,\,\,\,\,\,\,\,\,\,\,\,\,\,\,\,\,\,\,\,\,\,\,\,\,\,\,\,\,\,\,\,\,u \in \{1, ..., U\}\label{eq:3c},\\
&P_{k,m}^{CU}\leq P_{max},\,k \in \{1, ..., K\}, m \in \{1, ..., M\}\label{eq:3d},\\
&Th_u(l_{u})T\geq R_u^W,\, u \in \{1, ..., U\}\label{eq:3e},\\
&\sum_{m}^{M}\sum_{u}^{U}\chi_{k,m,u}R_{k,m,u}\geq R_k^{L},\, k \in \{1, ..., K\} \label{eq:3f}.
\end{align}
\end{subequations}where, constraint (\ref{eq:3a}) indicates that a CU can be allocated up
 to 1 UB at a time. (\ref{eq:3b}) is the limitation of the available
 resource of each UB for LTE-U transmission. In (\ref{eq:3c}),
 $\chi_{k,m,u}$ is a binary number, equal to 1 if $CU_k$ served
 by $SCBS_m$ on $UB_u$, or 0 otherwise. The transmission power limit
 of each CU is set in (\ref{eq:3d}). The throughput minimum requirement
 of each Wi-Fi user and CU is shown in (\ref{eq:3e}) and (\ref{eq:3f}),
 respectively.

The general technique used to solve the multi-objective optimization is a weighted-sum or scalarization method by transforming a multi-objective function into a single-objective function \cite{ruzika2005approximation} as:
\begin{subequations}
\begin{align}
&{min}({-\sum_{k=1}^K}\gamma_k {PE}_k^{CU})\tag{8}\label{eq:3},\\
s.t \nonumber\\
&\sum_{k=1}^K\gamma_k=K,\\
&\sum_{k}^{K}\sum_{u}^{U}\chi_{k,m,u}\leq 1,\, m \in \{1, ..., M\} \label{eq:4a},\\
&\sum_{m}^{M}\sum_{u}^{U}\chi_{k,m,u} I_{k,m,u} t\leq T{l_u},\,\,\,\, k \in \{1, ..., K\} \label{eq:4b},\\
&\chi_{k,m,u}\in \left \{0,1 \right \} ,\,k \in \{1, ..., K\}, m \in \{1, ..., M\}, \nonumber\\&\,\,\,\,\,\,\,\,\,\,\,\,\,\,\,\,\,\,\,\,\,\,\,\,\,\,\,\,\,\,\,\,\,\,\,\,\,\,\,\,\,\,\,\,\,u \in \{1, ..., U\}\label{eq:4c},\\
&P_{k,m}^{CU}\leq P_{max},\,k \in \{1, ..., K\}, m \in \{1, ..., M\}\label{eq:4d},\\
&Th_u(l_{u})T\geq R_u^W,\, u \in \{1, ..., U\}\label{eq:4e},\\
&\sum_{m}^{M}\sum_{u}^{U}\chi_{k,m,u}R_{k,m,u}\geq R_k^{L},\, k \in \{1, ..., K\} \label{eq:4f}.
\end{align}
\end{subequations}
The effectiveness of the transformations is given in $Lemma$ \ref{multi-single} \cite{ruzika2005approximation} as:
\begin{lemma}\label{multi-single}
The single-objective minimizer is an effective solution for the original multi-objective problem.
If the $\gamma_k$ weight vector is strictly greater than zero, then the single-objective
minimizer is a strict Pareto optimum.
\end{lemma}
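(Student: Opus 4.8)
The plan is to establish both assertions by contradiction, using only the linearity of the scalarized objective in the weights. First I would fix notation: write $f_k(x)=-PE_k^{CU}$ so that problem (\ref{eq:2}) minimizes the vector $(f_1(x),\ldots,f_K(x))$, let $\Omega$ denote the feasible region carved out by the association, resource, power and QoS constraints, and let $x^\star\in\Omega$ be any minimizer of the scalarized objective $g(x)=\sum_{k=1}^K \gamma_k f_k(x)$ from (\ref{eq:3}). The key structural observation I would record up front is that the normalization $\sum_k \gamma_k=K$ constrains only the weight vector and not the decision variables, so problems (\ref{eq:2}) and (\ref{eq:3}) are optimized over the \emph{same} set $\Omega$; moreover $\sum_k \gamma_k = K>0$ forces at least one weight to be strictly positive. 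I would also state the working definitions: $x^\star$ is \emph{effective} (weakly Pareto optimal) if no feasible point improves every objective strictly, and a \emph{strict Pareto optimum} if no feasible point is at least as good in every objective and strictly better in at least one.

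For the first assertion I would suppose, to the contrary, that some feasible $x'\in\Omega$ strictly dominates $x^\star$ in every coordinate, i.e. $f_k(x')<f_k(x^\star)$ for all $k$. Multiplying each inequality by $\gamma_k\ge 0$ and summing, the coordinates with $\gamma_k=0$ contribute equally on both sides while those with $\gamma_k>0$ contribute a strict decrease; since at least one weight is positive, this yields $g(x')<g(x^\star)$, contradicting the optimality of $x^\star$ for (\ref{eq:3}). Hence no such $x'$ exists and $x^\star$ is effective.

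For the second assertion I would additionally assume $\gamma_k>0$ for every $k$ and suppose $x^\star$ is not a strict Pareto optimum, so there is a feasible $x'\in\Omega$ with $f_k(x')\le f_k(x^\star)$ for all $k$ and $f_j(x')<f_j(x^\star)$ for some index $j$. Splitting the weighted sum as $g(x')=\sum_{k\neq j}\gamma_k f_k(x')+\gamma_j f_j(x')$, the non-strict inequalities are preserved under multiplication by the positive weights, while the single strict inequality, weighted by $\gamma_j>0$, makes the aggregate strict: $g(x')<g(x^\star)$, again contradicting optimality. Therefore $x^\star$ is a strict Pareto optimum.

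The argument is short, and the only genuine care required is bookkeeping of inequality directions: because the energy-efficiency objectives have been negated to cast maximization as minimization, I must check that dominance is expressed consistently in terms of the $f_k$ and that strictness survives the weighting precisely when the associated weight is positive. The one place the weaker hypothesis could fail is if all strictly improved coordinates happened to carry zero weight in the Pareto (non-weak) case; this is exactly what strict positivity of \emph{every} $\gamma_k$ rules out, which is why the hypothesis $\gamma_k>0$ is needed to upgrade ``effective'' to ``strict Pareto.'' Notably, no convexity or differentiability of $PE_k^{CU}$ is invoked, so the conclusion holds over the (generally nonconvex) feasible set $\Omega$ at hand.
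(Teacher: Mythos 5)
Your proof is correct, but there is nothing in the paper to compare it against: the paper never proves this lemma, it simply imports it with a citation to \cite{ruzika2005approximation}. What you have written is exactly the standard scalarization argument underlying that cited result --- contradiction via linearity of the weighted sum, where nonnegativity of the weights plus at least one strictly positive weight rules out a feasible point that strictly improves every objective (the ``effective''/weakly Pareto claim), and strict positivity of every weight rules out a feasible point that weakly improves all objectives and strictly improves at least one (the strict Pareto claim). One hypothesis deserves to be explicit rather than implicit: your first part multiplies inequalities by $\gamma_k \ge 0$, but the paper's formulation (\ref{eq:3}) constrains the weights only through $\sum_{k=1}^{K}\gamma_k = K$, which does not by itself force nonnegativity, and with a negative weight the first claim is false; so you should state $\gamma_k \ge 0$ as a standing assumption, as the source lemma does. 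Your reading of ``effective'' as \emph{weak} Pareto optimality is the right one --- it is the only reading under which the lemma's two-tier structure (weaker hypothesis yields weaker conclusion) is coherent, and your observation that zero-weight coordinates are precisely where plain dominance could hide is the correct explanation of why strict positivity is needed for the upgrade. Your closing remark that no convexity of $PE_k^{CU}$ or of the feasible set is required is also correct: convexity enters only in the converse direction of the weighted-sum theorem (that every efficient point arises from some weight vector), which is not claimed here.
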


where strict Pareto optimum is defined as follows:
\begin{definition}\label{optimum}
Strict Pareto Optimum: A solution Matrix \textbf{\textit{M}} is said to be a strict Pareto optimum or a strict efficient solution for the
multi-objective problem (\ref{eq:2}) if and only if there is no $m\subseteq S$ such that ${PE}_k^{CU}(m)\leq  {PE}_k^{CU}(m')$
for all $k \in  {1, . . . ,K}$, with at least one strict inequality. $S$ is the constraints (\ref{eq:3a}-\ref{eq:3f}).
\end{definition}

If all the CUs are of the same priority, i.e.,
\begin{equation}\label{123}
  \gamma_k=1, k \in \{1, ..., K\}.
\end{equation}
The EE optimization is finally transformed into:
\begin{subequations}
\begin{align}
&{min}(-\sum_{k=1}^K {PE}_k^{CU})\tag{10}\label{eq:5},\\
s.t \nonumber\\
&\sum_{k}^{K}\sum_{u}^{U}\chi_{k,m,u}\leq 1,\, m \in \{1, ..., M\} \label{eq:5a},\\
&\sum_{m}^{M}\sum_{u}^{U}\chi_{k,m,u} I_{k,m,u} t\leq T{l_u},\,\,\,\, k \in \{1, ..., K\} \label{eq:5b},\\
&\chi_{k,m,u}\in \left \{0,1 \right \} ,\,k \in \{1, ..., K\}, m \in \{1, ..., M\}, \nonumber\\&\,\,\,\,\,\,\,\,\,\,\,\,\,\,\,\,\,\,\,\,\,\,\,\,\,\,\,\,\,\,\,\,\,\,\,\,\,\,\,\,\,\,\,\,\,u \in \{1, ..., U\}\label{eq:5c},\\
&P_{k,m}^{CU}\leq P_{max},\,k \in \{1, ..., K\}, m \in \{1, ..., M\}\label{eq:5d},\\
&Th_u(l_{u})T\geq R_u^W,\, u \in \{1, ..., U\}\label{eq:5e},\\
&\sum_{m}^{M}\sum_{u}^{U}\chi_{k,m,u}R_{k,m,u}\geq R_k^{L},\, k \in \{1, ..., K\} \label{eq:5f}.
\end{align}
\end{subequations}

We denote the solution for optimization problem (\ref{eq:5}) as Matrix \textbf{\textit{M}}, which, according to Lemma. \ref{multi-single},
is an strict Pareto optimum for the multi-objective optimization problem (\ref{eq:2}).

In the expression of $PE_k^{CU}$, which is nonlinear, $I_{k,m,u}$ and $\chi_{k,m,u}$ are integers, while $R_{k,m,u}$ and $P_{k,m}^{CU}$ are continuous variables. The objective function (10) is a summation of $PE_k^{CU}, k \in \{1, ..., K\}$, thus, it is a mixed integer nonlinear programming (MINLP) problem, which is typically NP-hard. Thus, to reduce the computation complexities, we developed a matching-based solution, which will be discussed in the following section.
\section{Matching with Incomplete Preference Lists}
\subsection{Introduction to Matching Theory and Student-Project-Allocation Problem}
Student project allocation (SPA) is a one-to-many matching game, where each student has a preference list of the projects that they can choose from, while the lecturers have a preference list of students for each project or a preference list of student-project pairs. There is an upper bound, also known as the quota, on the number of students  that can be assigned to each particular project \cite{manlove2013algorithmics}.

Inspired by the SPA problem, we model the resource allocation problem in (\ref{eq:5}) as an SPA game, where the CUs, UBs and SCBSs are considered equivalent to students, projects and lecturers, respectively. Similarly, SCBSs offer the set of available UBs and maintain a preference list for each UB, and each CU has a preference list of UBs that they can use for uplink transmission. SCBSs allocate UBs to CUs based on the achievable EE on UBs. Meanwhile, our resource allocation problem differs from the SPA game in the following aspects:
\begin{itemize}
\item \textbf{Maximum throughput}:
The quota in the SPA problem is replaced by the maximum achievable throughput of a UB. The maximum achievable throughput of a UB determines the maximum number of CUs that it can be allocated to while meeting the minimum required Wi-Fi throughput in the TDD mode.
\item \textbf{Incompleteness of preference lists}:
The SCBSs sense the availabilities of and keep the CUs updated. Any UB that is not able to fulfil a CU's minimal throughput requirement will be deleted from the preference list of the CU and the CU will be removed from the preference list of that UB. Only a subset of UBs (CUs) are in the preference list of a CU (UB), i.e., the preference lists are incomplete.
\end{itemize}

The \textit{k}th CU preferring the \textit{u}th UB over the \textit{u'}th UB is denoted by $pri(CU_k, UB_u)> pri(CU_k, UB_{u'})$. Similarly,
$pri(UB_u, CU_k)> pri(UB_u, CU_{k'})$ indicates that the \textit{u}th UB prefers the
\textit{k}th CU over \textit{k'}th CU. The one-to-many matching is defined as follows:
\begin{definition}
Let \textbf{$\mu$} denote the one-to-many matching between two disjoint sets \textbf{\textit{CU}} and \textbf{\textit{UB}}.

$\mu(CU_k)=UB_u$ indicates that the \textit{kth} CU is matched to the \textit{uth} UB,

$\mu(UB_u)=\{CU_k,..., CU_{k'}\}$ indicates that the \textit{uth} UB is matched to $\{CU_k,..., CU_{k'}\}$,

$\mu(CU_k)=CU_k$ indicates that the \textit{kth} CU is not really matched to any UB.
\end{definition}
The stability implies the robustness of the matching against deviations caused by the individual rationality of players, i.e., the CUs in our resource allocation problem. In an unstable matching, two CUs may swap their matched UBs to maximize their own EE, leading to an undesirable and unstable resource allocation. The definition of stability of the one-to-many matching is given as follows:

\begin{definition}{Stability of One-to-Many Matching}.
The one-to-many matching \textbf{$\mu$} between two disjoint sets \textbf{\textit{CU}} and \textbf{\textit{UB}} is stable, only if it is not blocked by any blocking individual or blocking pair, where the blocking individual and the blocking pair are defined in the following.
\end{definition}

Blocking individual in the EE optimization problem is defined as:
 \begin{definition}{Blocking Individual}.
 A CU is a blocking individual if it prefers to stay unmatched rather than being matched to any available UB.
\end{definition}

The blocking pair in the EE optimization problem is defined as:
\begin{definition}{Blocking Pair}.
A pair $(CU_k, UB_u)$ is a blocking pair if all the following 3 conditions are satisfied:

(1) $\mu(CU_k)$$\neq $$UB_u$ and $pri(CU_k,UB_u)$$>$$pri(CU_k,\mu(CU_k))$;

(2) $\mu(UB_u)$$\neq $$CU_k$ and $pri(UB_u,CU_k)$$>$$pri(UB_u,\mu(UB_u))$;

(3) There is enough spectrum in $UB_u$ to meet the minimum throughput requirement of $CU_k$.
\end{definition}
\subsection{Preference Lists of CUs Over UBs}
We assume that the preference of $CU_k$ over $UB_u$ is based on EE $PE_{k,m,u}^{CU}$ achieved by $CU_k$ served by $SCBS_m$ using $UB_u$ to guarantee its QoS threshold, which is written as follows:
\begin{equation} PE_{k,m,u}^{CU}=\frac{\sum_m^M\sum_u^U\chi_{k,m,u}R_{k,m,u}}{{\sum_m^M\sum_u^U\chi_{k,m,u}P_{k,m}^{CU}I_{k,m,u}t}}
\end{equation}
$CU_k$ prefers $UB_{u}$ over $UB_{u'}$ if $CU_k$ can achieve higher EE
using $UB_{u}$ than $UB_{u'}$, which is stated as follows:
\begin{equation}
pri(CU_k,UB_{u})>pri(CU_k,UB_{u'}) \Leftrightarrow PE_{k,m,u}^{CU}>PE_{k,m,u'}^{CU}
\end{equation}
None of the CUs have any knowledge about other co-channel coexisting CUs, before the final band allocation is
performed at SCBSs. Thus, the preference lists are set up based on local channel sensing
information and unlicensed band availability alone.
\subsection{Preference Lists of SCBS Over ($CU_k$, $UB_u$) Pair}
However the preference of $SCBS_m$ over the user-band pair ($CU_k$, $UB_u$) is based on the EE achieved by allocating $UB_u$ to $CU_k$ to fulfil the QoS threshold of $CU_k$. It is written as $SCBS_m$ prefers $CU_k$ over $CU_{k'}$ to occupy $UB_{u}$ if $CU_k$ can achieve higher EE
than $CU_{k'}$ by using $UB_{u}$, which is stated as follows:
\begin{equation}
pri(UB_{u},CU_k)>pri(UB_{u},CU_{k'}) \Leftrightarrow PE_{k,m,u}^{CU}>PE_{k',m,u}^{CU}
\end{equation}

\subsection{Two-Step Algorithm}
\subsubsection{Step 1: Modified GS Algorithm for One-to-Many Game}
\label{sec:step1}
To solve the above matching game, a 2-step algorithm is proposed.
The first step is an extension of the GS algorithm applied for a one-to-many matching with
incomplete preference lists. Each iteration begins with the unmatched
CUs proposing their favourite (i.e., the first UB) UB on their current preference lists.
The UBs which have been proposed to will be removed from the CUs' preference lists. For each $UB_u$, SCBSs
decide whether to accept or reject the CU's proposal $UB_u$ based on SCBSs' preference lists over ($CU_k$, $UB_u$) pairs.
SCBSs choose to keep the most preferred CUs as long as these CUs do not occupy more resources
than the UB could offer; the remaining CUs are rejected. Such a procedure runs until every CU is either matched or its preference list is empty. The
implementation detail of Step 1 of the algorithm is stated in \ref{alg:step1}
as follows:

\begin{algorithm}
    \caption{One-to-Many Matching}
    \label{alg:step1}
    \begin{algorithmic}[1]
        \State \textbf{Input:} $CU$, $UB$, $PL^{CU}$, $PL^{UB}$
        \State \textbf{Output:} Matching \textbf{$\mu_1$}
        \State \textbf{Step 1£º} Proposing£º
        \State \quad All free $CU_k$ propose their favourite $UB_u$ in their preference lists, and remove $UB_u$ from the list.
        \State \textbf{Step 2£º} Accepting/rejecting£º
       \State \quad $UB_u$ accept the most preferred $n$ proposers based on its preference list, the rest are rejected. The sum of the slot time of the accepted proposers does not exceed its available resource time.
       \State \quad None of the accepted proposers are free.
       \State \quad All the rejected proposers are free.
\State \textbf{Criterion£º}
\State \quad If every CUs is either allocated with a UB or its preference list is empty, this algorithm is terminated with an output $M_1$.
\State \quad Otherwise, \textbf{Step 1} and \textbf{Step 2} are performed again.
   \end{algorithmic}
\end{algorithm}
\begin{theorem}{Stability of \textbf{$\mu_1$}}.
\label{sta1}
In any instance of one-to-many matching, stable matching is achieved by using \ref{alg:step1}.
\end{theorem}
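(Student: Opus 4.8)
The plan is to follow the classical deferred-acceptance (Gale--Shapley) stability argument, adapted to the two features that distinguish our setting from textbook SPA: the \emph{incomplete} preference lists and the fact that each $UB_u$ has a \emph{resource-time} capacity rather than an integer quota. First I would establish termination: in \ref{alg:step1} a CU deletes a $UB$ from its list the moment it proposes to it, so each $CU_k$ proposes to each $UB_u$ at most once; since there are only finitely many $(CU_k,UB_u)$ pairs, the alternation of Step~1 and Step~2 halts after finitely many rounds, at which point every CU is either matched or has an empty list, producing a well-defined $\mu_1$. Since stability was defined as the absence of both a blocking individual and a blocking pair, I would then verify these two properties separately.

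\textbf{No blocking individual.} This I would obtain directly from the construction. A CU proposes only to UBs remaining on its preference list, and by the incompleteness preprocessing a $UB_u$ survives on $CU_k$'s list only if it can meet $CU_k$'s minimal throughput requirement $R_k^{L}$; moreover a CU is matched only to a UB it actually proposed to. Hence every matched CU finds its assignment acceptable and strictly prefers it to remaining unmatched, while an unmatched CU has exhausted its finite list of acceptable UBs. Individual rationality is therefore built into the proposal mechanism, so no CU is a blocking individual.

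\textbf{No blocking pair.} Suppose, for contradiction, that $(CU_k,UB_u)$ is a blocking pair. The engine of the argument is a monotonicity lemma: over the rounds of \ref{alg:step1}, the set of CUs held by $UB_u$ only \emph{improves} in $UB_u$'s preference, in the sense that once $UB_u$ rejects a proposer, every CU it subsequently holds is strictly preferred to that rejected proposer. I would prove this by induction on the rounds, using that Step~2 always retains the most preferred proposers whose aggregate slot time fits the available resource of $UB_u$. Now condition~(1) of the blocking-pair definition gives $pri(CU_k,UB_u) > pri(CU_k,\mu_1(CU_k))$; since each CU proposes down its list in order of decreasing preference and $\mu_1(CU_k)$ is the last UB it proposed to (or it is unmatched with an empty list), $CU_k$ must have proposed to $UB_u$ at an earlier round and been rejected. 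By the monotonicity lemma, at termination the CUs in $\mu_1(UB_u)$ are all at least as preferred by $UB_u$ as those that forced the rejection, and the cause of the rejection (too little remaining resource once the more-preferred proposers are kept) persists. This directly contradicts the simultaneous requirement that $UB_u$ prefer $CU_k$ to its current match (condition~(2)) while still having enough spectrum for $CU_k$ (condition~(3)). Hence no blocking pair exists and $\mu_1$ is stable.

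The hard part will be making the monotonicity lemma airtight under the resource-time capacity, because the accept/reject step is a knapsack-type greedy selection rather than the ``drop the single worst student'' rule of integer-quota SPA. The subtle pathology to exclude is that $UB_u$ might reject a more-preferred $CU_k$ merely because its slot-time demand $I_{k,m,u}t$ fails to pack, while retaining a less-preferred but smaller CU, which could spuriously create a pair satisfying (2) and (3). I would handle this by interpreting $pri(UB_u,\mu_1(UB_u))$ through the least-preferred retained member and by observing that condition~(3) is exactly the slack test performed in Step~2: any $(CU_k,UB_u)$ meeting (2) and (3) at termination would have led $UB_u$ to admit $CU_k$ in place of, or alongside, its worst current member, contradicting the rejection. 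Tightening this packing argument so that it genuinely precludes the knapsack pathology is where the real work of the proof lies.
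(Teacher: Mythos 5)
Your proof skeleton is essentially the same as the paper's: argue by contradiction, note that individual rationality is built into the proposal mechanism (so no blocking individual), and then kill blocking pairs by arguing that condition (1) forces $CU_k$ to have proposed to $UB_u$ and been rejected, with the rejection certifying that $UB_u$ is better off without it. The paper organizes this as four matched/unmatched cases where you use a monotonicity lemma, but the engine is identical. The genuine problem is the one you yourself flagged at the end: under the resource-time capacity, your monotonicity lemma is not merely hard to prove --- it is \emph{false}, and the knapsack pathology you set out to exclude can actually occur. Concretely, let $UB_u$ have available slot time $10$ and preference order $B \succ C \succ A \succ X \succ Y$ with slot-time demands $2,2,7,4,2$, and suppose the rest of the instance makes proposals arrive at $UB_u$ in the order: round 1: $A,X$ (greedy keeps $A$, rejects $X$ since $7+4>10$); round 2: $B$ (held $\{B,A\}$); round 3: $C$ (greedy now keeps $\{B,C\}$ and rejects $A$, since $2+2+7>10$, leaving slack $6$); round 4: $Y$ (held $\{B,C,Y\}$, slack $4$). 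After rejecting $X$, the band later holds $Y \prec X$, so the lemma's conclusion fails. Worse, if $UB_u$ was at the top of $X$'s list, then at termination $(X,UB_u)$ satisfies all three blocking conditions --- (1) holds by construction, (2) holds since $X \succ Y$, the least preferred held CU, and (3) holds since the residual slot time $4$ equals $X$'s demand --- so the output matching is not even stable in this instance. Rejection under greedy packing is caused by lack of room behind more-preferred proposers, not by preference, and once $X$ is struck from the lists it can never return even when later displacements free up room.

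Your proposed repair --- reading condition (3) as the slack test and arguing that any pair satisfying (2) and (3) ``would have led $UB_u$ to admit $CU_k$'' --- does not close this hole, because the slack appears only \emph{after} $CU_k$ has left the proposal pool, and \ref{alg:step1} never revisits struck entries. To be fair, the paper's own proof commits exactly the leap you were trying to avoid: its Cases 3 and 4 simply assert that ``$UB_u$ rejects $CU_k$ because it prefers other proposer(s),'' which is valid for integer quotas (where your monotonicity lemma is the classical deferred-acceptance invariant and your whole argument goes through verbatim) but not for heterogeneous slot-time demands. So your write-up is, if anything, more honest than the published argument about where the difficulty lies; but as proposed it cannot be completed. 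The theorem needs an extra hypothesis (e.g., all CUs demand equal slot time, collapsing the knapsack to a quota), a modified accept rule (e.g., re-opening rejected proposals when capacity is freed), or a stronger reading of condition (2) (preferred to \emph{every} held CU) before either your proof or the paper's becomes sound.
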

\begin{proof}
We prove this theorem by contradiction and assume that for an instance of one-to-many matching, \ref{alg:step1} terminates with an instable matching \textbf{$\mu_1$}, i.e., there exists at least one blocking pair ($CU_k$, $UB_u$) or one blocking individual $CU_k$.

If there exists one blocking pair ($CU_k$, $UB_u$) in \textbf{$\mu_1$}:
\begin{itemize}
\item Case 1: In \textbf{$\mu_1$}, ${UB}_u$ is unmatched and ${CU}_k$ is matched with ${UB}_u'$.
If ${UB}_u$ is not on the preference list of ${CU}_k$, then, ${CU}_k$ does not have an incentive to match with ${UB}_u$; If $pri(CU_{k},UB_{u'})>pri(CU_k,UB_{u})$, and ${CU}_k$ is matched with ${UB}_u'$ in \textbf{$\mu$}, then ${CU}_k$ does not have an incentive to match with ${UB}_u$; If $pri(CU_{k},UB_{u})>pri(CU_k,UB_{u'})$, then ${CU}_k$ proposes to ${UB}_u$ before ${UB}_{u'}$. ${CU}_k$ is rejected during the proposal stage or is accepted by ${UB}_u$ first, then is rejected. In conclusion, in any situation in which ${CU}_k$ is matched and ${UB}_u$ is unmatched, a blocking pair does not exist.
\item Case 2: In \textbf{$\mu_1$}, ${UB}_u$ being unmatched and ${CU}_k$ unmatched. ${UB}_u$ is unmatched means that it receives no proposal from CU, including ${CU}_k$. This means that ${UB}_u$ is not on ${CU}_k's$ preference list, then ${CU}_k$ does not have incentive to match with ${UB}_u$. In conclusion, in any situation in which both ${CU}_k$ and ${UB}_u$ are unmatched, blocking pair does not exist.

\item Case 3: In \textbf{$\mu_1$}, ${UB}_u$ being matched with ${CU}_k'$ and ${CU}_k$ unmatched. ${CU}_k$ is unmatched means that either it has no ${UB}_u$ in its preference list, or all its proposals have been rejected. For the former, ${CU}_k$ does not have an incentive to match with ${UB}_u$. For the latter, ${UB}_u$ rejects ${CU}_k$ because it prefers other proposer(s). Thus, ${UB}_u$ does not have an incentive to match with ${CU}_k$. In conclusion, in any situation in which both ${CU}_k$ is unmatched and ${UB}_u$ is matched, blocking pair does not exist.
\item Case 4: In \textbf{$\mu_1$}, ${UB}_u$ is matched with ${CU}_k'$ and ${CU}_k$ with ${UB}_u'$. ${UB}_u$ must be on ${CU}_k'$s preference list, and vice versa, otherwise, there is no incentive to form the (${CU}_k$, ${UB}_u$) pair. If $pri(CU_{k},UB_{u'})>pri(CU_k,UB_{u})$, then, ${CU}_k$ does not have an incentive to match with ${UB}_u$ if it is matching with ${UB}_{u'}$. If $pri(CU_{k},UB_{u})>pri(CU_k,UB_{u'})$, then, ${CU}_k$ proposes to ${UB}_u$ first and is rejected, because ${UB}_u$ prefers ${CU}_k'$ to ${CU}_k$, then ${UB}_u$ does not have an incentive to match with ${CU}_k'$. In conclusion, in any situation in which both ${CU}_k$ and ${UB}_u$ are matched, a blocking pair does not exist.
    \end{itemize}
Contradictions, as (${CU}_k$, ${UB}_u$) is any pair, thus, it could be said that there is no blocking pair in matching \textbf{$\mu_1$}.

If one blocking individual $CU_k$ or $UB_u$ exists in \textbf{$\mu_1$}:

for blocking individual $CU_k$:
\begin{itemize}
\item Case 1: In \textbf{$\mu_1$}, ${CU}_k$ is matched with ${UB}_u$, i.e., ${UB}_u$ is on ${CU}_k$'s preference list, as such ${CU}_k$ does not have incentive be unmatched. In conclusion, in any situation in which both ${CU}_k$ and ${UB}_u$ are unmatched, blocking individual $CU_k$ does not exist.
    \end{itemize}

The proof that blocking individual $UB_u$ does not exist is similar to that blocking individual $CU_k$ does not exist.

As the above blocking pair ($CU_k$, $UB_u$), blocking individuals $CU_k$ or $UB_u$ can be any pair or individual, thus, we could prove that there is no blocking pair or blocking individual in matching \textbf{$\mu_1$}.
\end{proof}

\begin{theorem}{Praeto optimality of \textbf{$\mu_1$}}.

In any instance of one-to-many matching, stable matching \textbf{$\mu_1$} achieved by \ref{alg:step1} is Praeto optimal, i.e., no player(s) can better off, whilst no players are worse off.
\end{theorem}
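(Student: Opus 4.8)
The plan is to argue by contradiction, reducing any Pareto improvement to a blocking pair and thereby invoking the stability of $\mu_1$ already established in Theorem \ref{sta1}. A structural fact I would rely on throughout is that the EE values $PE_{k,m,u}^{CU}$ induce \emph{strict} preferences on both sides (distinct achievable EE levels), so a player is strictly better off precisely when it is matched to a strictly more preferred partner, or, for a UB, admits a strictly more preferred CU. This strictness is what lets the stability-to-Pareto reduction go through.

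Concretely, I would suppose $\mu_1$ is not Pareto optimal. Then there is a feasible matching $\mu'$ satisfying all constraints of (\ref{eq:5}) in which no CU and no UB is worse off than in $\mu_1$, while at least one player is strictly better off. I would first treat the case in which the strictly-better-off player is a CU. Let $CU_k$ be such a player with $\mu'(CU_k)=UB_u$, so that $pri(CU_k,UB_u)>pri(CU_k,\mu_1(CU_k))$; this is condition (1) of the blocking-pair definition. Because $\mu'$ is feasible and assigns $CU_k$ to $UB_u$, the throughput requirement of $CU_k$ is met on $UB_u$, which I would use to secure condition (3). It then remains to establish condition (2): that $UB_u$ prefers $CU_k$ to one of its $\mu_1$-partners. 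Here I would use that $UB_u$ is not worse off under $\mu'$; since $CU_k\in\mu'(UB_u)$ and $UB_u$ ranks admitted CUs by EE in a responsive manner, $CU_k$ must be at least as preferred as the least-preferred CU admitted by $UB_u$ in $\mu_1$, or $UB_u$ had spare capacity in $\mu_1$. Either way $(CU_k,UB_u)$ satisfies all three blocking conditions, contradicting the stability of $\mu_1$.

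The symmetric case, in which the strictly-better-off player is a UB, I would handle by the same mechanism: a UB admitting in $\mu'$ a strictly more preferred CU $CU_k$ than in $\mu_1$ forces $CU_k$ to be weakly better off as well, and tracing $CU_k$ together with the feasibility of $\mu'$ again produces a blocking pair. In both cases the contradiction completes the argument, so $\mu_1$ is Pareto optimal.

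The main obstacle I anticipate is the one-to-many (responsive) structure: ``no UB is worse off'' is a statement about the \emph{set} of admitted CUs rather than a single partner, so the delicate step is translating this set-level weak preference into the pairwise comparison required by condition (2), while simultaneously guaranteeing that the feasibility/throughput constraint in condition (3) survives when $CU_k$ is moved onto $UB_u$ in $\mu_1$. I would secure this by exploiting responsiveness, comparing $CU_k$ against the marginal, least-preferred member of $\mu_1(UB_u)$, and by reading condition (3) directly off the feasibility of $\mu'$. I would also note explicitly that the strictness of the EE-based preferences is essential, since the reduction from stability to Pareto optimality can fail in the presence of ties.
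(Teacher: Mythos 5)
Your proposal is correct and follows essentially the same route as the paper's own proof: assume a Pareto improvement exists, convert it into a blocking pair (or blocking individual), and contradict the stability of $\mu_1$ established in Theorem \ref{sta1}. Your version is in fact more careful than the paper's --- which merely asserts the two cases (unmatched CU improving, matched CU switching) without verifying the UB-side preference condition or the capacity condition --- but the underlying argument is the same.
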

\begin{proof}
In stable matching \textbf{$\mu_1$}:
\begin{itemize}
\item Case 1: There exists an unmatched ${CU}_k$, which can be matched to ${UB}_u$ to increase the achievable EE of both ${CU}_k$ and ${UB}_u$, meaning that (${CU}_k$, ${UB}_u$) is the blocking pair of matching \textbf{$\mu_1$}, contracting \textbf{Theorem \ref{sta1}}.

\item Case 2: There exists a (${CU}_k$, ${UB}_u$) pair. Obviously, ${CU}_k$ does not have an incentive to be unmatched; ${CU}_k$ has the incentive to change partner from ${UB}_u$ to ${UB}_{u'}$ to increase its achievable EE, meaning that (${CU}_k$, ${UB}_{u'}$) is a blocking pair of matching \textbf{$\mu_1$}, contracting \textbf{Theorem \ref{sta1}}.
    \end{itemize}
    It is impossible to increase the EE of some CUs' without decreasing that of the remaining of the CUs. The state stands for UB, which can be proven similarly as above.
\end{proof}

We define the computational complexity of \ref{alg:step1} as the number of accepting/rejecting decisions required to output a stable matching \textbf{$\mu_1$}. The complexity of \ref{alg:step1}, i.e., the convergence of \ref{alg:step1} is given in \textbf{Theorem \ref{comSTEP1}}.

\begin{theorem}\label{comSTEP1}{Complexity of \ref{alg:step1} (Convergence of \ref{alg:step1})}.
\label{complexity}
In any instance of many-to-one matching, a matching \textbf{$\mu_1$} can be obtained by using \ref{alg:step1} within $\mathcal{O}(KU)$ iterations.
\end{theorem}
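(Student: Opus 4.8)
The plan is to bound the running time by a monovariant (potential-function) argument that counts the total number of proposals the algorithm can ever issue. The central observation is structural: by \textbf{Step 1} of \ref{alg:step1}, the instant a $CU_k$ proposes to a band that band is deleted from $CU_k$'s preference list. Consequently a fixed pair $(CU_k, UB_u)$ can be proposed at most once over the entire run, and no proposal is ever repeated.

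First I would observe that each $CU_k$ begins with a preference list containing at most $U$ unlicensed bands, so each CU issues at most $U$ proposals in total; summing over the $K$ CUs, the algorithm issues at most $KU$ proposals altogether. Next I would argue that every iteration which does not trigger the termination \textbf{Criterion} must consume at least one of these proposals: if the algorithm has not yet terminated, then by the \textbf{Criterion} there is at least one free CU whose preference list is non-empty, and in \textbf{Step 1} this CU proposes to (and deletes) its current favourite band. Hence the quantity $\Phi$ equal to the total number of remaining preference-list entries, which starts at no more than $KU$, strictly decreases by at least one per iteration. Since $\Phi \ge 0$ throughout, the algorithm must terminate after at most $KU$ iterations, yielding the $\mathcal{O}(KU)$ bound.

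To match the complexity measure defined just before the theorem (the number of accept/reject decisions), I would then note that each proposal received by a $UB_u$ triggers exactly one re-evaluation of that band's held set in \textbf{Step 2}; therefore the number of accept/reject decisions is at most the total number of proposals, which is again $\mathcal{O}(KU)$. The two counts thus agree up to the same asymptotic order.

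The step I expect to require the most care is establishing strict per-iteration progress for the \emph{synchronous} variant, in which all free CUs propose simultaneously rather than one at a time: I must rule out that the procedure stalls or cycles. This reduces to two facts, both following from the deletion rule of \textbf{Step 1}: that non-termination forces at least one free CU to issue a fresh proposal, so $\Phi$ genuinely drops each iteration, and that a CU already deleted from (or rejected by) a band never re-proposes to it, so $\Phi$ can never increase. Once these are checked, $\Phi$ is a bona fide strictly-decreasing, non-negative, integer-valued monovariant bounded initially by $KU$, and the iteration bound follows immediately.
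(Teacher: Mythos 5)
Your proof is correct and follows essentially the same route as the paper's: bounding the total number of proposals by noting each CU's preference list has at most $U$ entries, each proposal permanently deletes one, and summing over $K$ CUs gives $\mathcal{O}(KU)$. Your additions---the explicit monovariant $\Phi$, the check that non-termination forces strict progress, and the reconciliation of proposal count with accept/reject decision count---merely make rigorous what the paper asserts in three sentences, so no further comparison is needed.
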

\begin{proof}
In each iteration, a CU proposes to its most favourite UB in its current preference list, and SCBS accepts/rejects the proposal. The maximum number of elements in the preference list of $CU_k$ equals the number of UBs, i.e., $U$. Thus, stable matching \textbf{$\mu_1$} can be obtained in $\mathcal{O}(KU)$ overall time, where $K$ is the number of CUs and $U$ is the number of UBs.
\end{proof}
\subsubsection{Step 2: EE Optimization}
As proven above, stability and Pareto optimality have been guaranteed by using algorithm \ref{alg:step1},
meaning that there are no incentives for any CUs and UBs to form new matching.
However, the preference lists of CUs could to be incomplete, some CUs may be unmatched \cite{iwama2008survey}, \cite{shrivastava2016stable}.

To further maximize system's EE by increasing the number
of CUs matched by algorithm \ref{alg:step2}, an iteration of algorithm \ref{alg:step2}
begins with an unmatched $CU_k$ proposing to its most favourite $UB_u$, and $UB_u$
would be deleted from the preference list of $CU_k$. An SCBS would
consider this proposal acceptable if the following criteria are fulfilled:
\begin{itemize}
\item After deleting several non-favourites or all CUs matched with $UB_u$ in \textbf{$\mu_1$} obtained via algorithm \ref{alg:step1},
the minimal throughput of $CU_k$ can be achieved by using $UB_u$
\item All the deleted CUs could be served by other UBs to fulfil
their minimal throughput requirement.
\item The EE of the new matching \textbf{$\mu_k$} is greater than
that of the previous matching \textbf{$\mu_1$}.
\end{itemize}
Such matching \textbf{$\mu_k$} would be considered as a profitable reallocation, and would be
updated as the new matching, if only one profitable reallocation exists.
Should there be multiple profitable reallocations, the one that enhances
the overall EE the most would be the new matching. The iterations
would run several times, until every CU is either allocated with a UB or its
preference list is empty. The detail of algorithm \ref{alg:step2} is described as follows:

\begin{algorithm}
    \caption{System EE Maximization}
    \label{alg:step2}
    \begin{algorithmic}[1]
        \State \textbf{Input:} $CU$, $UB$, $PL^{CU}$, $PL^{UB}$, \textbf{$\mu_1$}
        \State \textbf{Output:} Matching \textbf{$\mu_2$}
        \State \textbf{Step 1£º} Proposing£º
        \State \quad Every free $CU_k$ proposes to their favourite $UB_u$ in their preference lists, and removes $UB_u$ from the list.
        \State \textbf{Step 2£º} Reallocation£º
       \State \quad Each $CU_k$ is accommodated in $UB_u$ by deleting its non-favorite partners in \textbf{$\mu_2$},
       to ensure that the occupying slot time does not exceed the available slot time
       \State \quad All the deleted CUs can be accommodated by other UBs. A matching \textbf{$\mu_k$} is formed.
       \State \quad EE increases from matching \textbf{$\mu_1$} to \textbf{$\mu_k$}.
       \State \quad \textbf{$\mu_k$} is stored if all the above three criteria are fulfilled. \textbf{Step 2} is performed
       until all free CUs have gone through \textbf{Step 2}.
       \State \textbf{Step 3£º} Accepting/rejecting£º
       \State \quad The \textbf{$\mu_k$} that increases the system's EE most
       is updated; $CU_k$ is set to be served. The rest \textbf{$\mu_{k'}$} are rejected,
       and $CU_{k'}$ are rejected and set to be free.
\State \textbf{Criterion£º}
\State \quad Each CUs is either allocated with a UB or its preference list is empty, this algorithm is terminated with an output \textbf{$\mu_2$}.
\State \quad Otherwise, \textbf{step 1}, \textbf{step 2} and \textbf{step 3} are performed again.
   \end{algorithmic}
\end{algorithm}

\begin{theorem}{Stability of \textbf{$\mu_2$}}.
\label{sta2}
In any instance of one-to-many matching, stability is achieved by using \ref{alg:step2} in \textbf{$\mu_2$}.
\end{theorem}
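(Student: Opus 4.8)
\textit{Proof plan.}
I would prove the claim by contradiction, transplanting the case analysis of Theorem~\ref{sta1} onto $\mu_2$ and supplementing it with the reallocation and EE-improvement rules of \ref{alg:step2}. Assume $\mu_2$ is unstable, so it admits a blocking pair $(CU_k,UB_u)$ or a blocking individual. Before the case split I would establish two structural invariants of \ref{alg:step2}: first, since every free CU proposes to its most preferred remaining UB and deletes that UB afterwards, each CU exhausts its list from the top, so a CU never settles on a band it prefers less than one it has not yet proposed to; second -- and this is the crucial point -- the reallocation step only ever displaces \emph{non-favourite} partners of a UB in order to admit a CU the UB prefers, so the matched set of every UB improves monotonically in its preference ordering as the algorithm runs, exactly as the women's partners improve in Gale--Shapley. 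Because \ref{alg:step2} is initialised at the stable matching $\mu_1$ of Theorem~\ref{sta1}, these invariants hold throughout.

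With the invariants in hand the blocking-pair case splits as in Theorem~\ref{sta1}. The subcases where $UB_u$ lies on neither preference list, or where $CU_k$ weakly prefers its own partner, immediately violate condition~(1) of the blocking-pair definition. In the remaining subcase $CU_k$ strictly prefers $UB_u$ to $\mu_2(CU_k)$; by the favourite-first invariant $CU_k$ must have proposed to $UB_u$ at an earlier iteration and, not being matched to it in $\mu_2$, was rejected or later displaced. By the monotone-improvement invariant every CU in $\mu_2(UB_u)$ is then at least as preferred by $UB_u$ as $CU_k$, so condition~(2) fails. The only way to escape this is if $UB_u$ retains spare spectrum at termination (condition~(3)); but then admitting $CU_k$ requires deleting no one, adds the strictly positive EE contribution of $CU_k$, and hence constitutes a stored, strictly EE-increasing reallocation $\mu_k$ -- so \ref{alg:step2} could not have terminated at $\mu_2$, a contradiction. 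Thus no blocking pair exists.

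For a blocking individual I would reuse the argument of Theorem~\ref{sta1} verbatim: a CU matched in $\mu_2$ has its assigned band on its list and strictly prefers service to being free, so it cannot be a blocking individual, and the symmetric statement for a UB follows identically. Since the blocking pair and blocking individual were arbitrary, $\mu_2$ admits neither and is therefore stable.

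The step I expect to be the main obstacle is establishing the monotone-improvement invariant under the \emph{EE-based} acceptance rule rather than the pure preference rule of Gale--Shapley: I must verify that a reallocation accepted only because it raises total system EE nevertheless never leaves a UB holding a CU less preferred than one it previously rejected, and that the simultaneous requirements ``all displaced CUs are re-servable'' and ``EE strictly increases'' are exactly what rule out an EE-declined change masquerading as a preference-based blocking pair. Once that invariant is secured, the remainder reduces to the bookkeeping already carried out for Theorem~\ref{sta1}.
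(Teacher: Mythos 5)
There is a genuine gap, and it sits exactly where you yourself flagged the ``main obstacle'': the monotone-improvement invariant is false for \ref{alg:step2}. In Gale--Shapley the receiving side accepts and rejects purely by its own preferences, so its assignees can only improve; but the reallocation step of \ref{alg:step2} accepts a change if and only if (i) the proposing \emph{free} $CU_k$'s minimum throughput fits on $UB_u$ after deleting some non-favourite occupants, (ii) every deleted occupant can be re-served on another UB, and (iii) the \emph{system-wide} EE strictly increases. None of these clauses compares $pri(UB_u,CU_k)$ with $pri(UB_u,\cdot)$ of the occupants being deleted. Consequently a UB can end up holding a newly admitted CU that it ranks \emph{below} a CU it rejected in \ref{alg:step1}: the displaced occupants' EE gains on their new bands can more than compensate for installing a low-EE user on $UB_u$, so clause (iii) can hold while the UB's own ranking deteriorates. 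The sentence ``by the monotone-improvement invariant every CU in $\mu_2(UB_u)$ is at least as preferred by $UB_u$ as $CU_k$, so condition (2) fails'' is therefore unsupported, and with it the core of your blocking-pair argument in the case where both $CU_k$ and $UB_u$ are matched. Your fallback via condition (3) does not rescue that case either: you argue that spare spectrum at $UB_u$ would have produced a stored, EE-increasing reallocation admitting $CU_k$, but \ref{alg:step2} only processes proposals from free CUs, so a $CU_k$ that is matched (to a less preferred band) never proposes in \ref{alg:step2} and no such reallocation is ever examined; termination yields no contradiction.

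The paper's own proof takes a different route precisely at this point. In its Case 4 it splits on \emph{why} $CU_k$'s earlier proposal to $UB_u$ failed: if the rejection in \ref{alg:step1} was preference-based, it denies condition (2); if instead the current occupant $CU_{k'}$ was installed during the reallocation stage, it denies the blocking pair through condition (3), asserting that $UB_u$ lacks sufficient spectrum for $CU_k$ (``otherwise the pair would have been formed in $\mu_2$''). That is, the paper tracks the execution history across both algorithms and leans on the resource clause, rather than asserting any monotonicity of the UBs' assignments --- the invariant you would need simply is not available under an EE-driven acceptance rule. (One may question how airtight the paper's own handling of that subcase is, but it at least does not rest on a false lemma.) Your treatment of blocking individuals and of the cases with an unmatched $UB_u$ or unmatched $CU_k$ is sound and essentially coincides with the paper's; to repair the remaining case you would have to replace the monotone-improvement step with an argument of the paper's type, showing directly that whenever condition (2) survives a reallocation, condition (3) must fail.
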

\begin{proof}
We prove this theorem by contradiction and assume that for an instance of one-to-many matching, \ref{alg:step2} terminates with an instable matching \textbf{$\mu_2$}, i.e., there exists at least one blocking pair ($CU_k$, $UB_u$) or one blocking individual $CU_k$ or $UB_u$.

If there exists one blocking pair ($CU_k$, $UB_u$) in \textbf{$\mu_2$}:
\begin{itemize}
\item Case 1: In \textbf{$\mu_2$}, ${UB}_u$ is unmatched and ${CU}_k$ is matched with ${UB}_u'$.
If ${UB}_u$ is not on the preference list of ${CU}_k$, then, ${CU}_k$ does not have an incentive to match with ${UB}_u$; If $pri(CU_{k},UB_{u'})>pri(CU_k,UB_{u})$, and ${CU}_k$ is matched with ${UB}_u'$ in \textbf{$\mu_2$}, then ${CU}_k$ does not have an incentive to match with ${UB}_u$; If $pri(CU_{k},UB_{u})>pri(CU_k,UB_{u'})$, then ${CU}_k$ proposes ${UB}_u$ before ${UB}_{u'}$ in \ref{alg:step1}, or re-matches to ${UB}_u$ before ${UB}_{u'}$ in \ref{alg:step2}. The result is that ${CU}_k$ matches to ${UB}_{u'}$, meaning that ${CU}_k$ is rejected at some stage in \ref{alg:step1} or \ref{alg:step2}. In conclusion, in any situation in which ${CU}_k$ is matched and ${UB}_u$ is unmatched, a blocking pair does not exist.
\item Case 2: In \textbf{$\mu_1$}, ${UB}_u$ being unmatched and ${CU}_k$ unmatched. ${UB}_u$ is unmatched means that it receives no proposal from CU, including ${CU}_k$ in both \ref{alg:step1} and \ref{alg:step2}. As both \ref{alg:step1} and \ref{alg:step2} terminate when every CU is matched or its preference list is empty. ${UB}_u$ being unmatched means that either its preference list is empty or does not contain ${UB}_u$. Then ${CU}_k$ does not have an incentive to match with ${UB}_u$. In conclusion, in any situation in which both ${CU}_k$ and ${UB}_u$ are unmatched, a blocking pair does not exist.
\item Case 3: In \textbf{$\mu_1$}, ${UB}_u$ being matched with ${CU}_k'$ and ${CU}_k$ unmatched. ${CU}_k$ is unmatched means that either it has no ${UB}_u$ in its preference list, or all its proposal have been rejected in both \ref{alg:step1}, and ${CU}_k$ can not be matched to any UBs in the reallocation stage in\ref{alg:step2}. For the former case, ${CU}_k$ does not have an incentive to match with ${UB}_u$. For the latter case, ${UB}_u$ rejects ${CU}_k$ because it prefers other proposer(s), and there are not enough spectrum resources in ${UB}_u$ to serve ${CU}_k$. Thus, ${UB}_u$ does not have incentive to match with ${CU}_k$. In conclusion, in any situation in which both ${CU}_k$ is unmatched and ${UB}_u$ is matched, a blocking pair does not exist.
\item Case 4: In \textbf{$\mu_1$}, ${UB}_u$ is matched with ${CU}_k'$ and ${CU}_k$ with ${UB}_u'$. ${UB}_u$ must be on ${CU}_k'$s preference list, and vice versa, otherwise, there is no incentive to form the (${CU}_k$, ${UB}_u$) pair. If $pri(CU_{k},UB_{u'})>pri(CU_k,UB_{u})$, then, ${CU}_k$ does not have an incentive to match with ${UB}_u$ if it is matched with ${UB}_{u'}$. If $pri(CU_{k},UB_{u})>pri(CU_k,UB_{u'})$, then, ${CU}_k$ proposes to ${UB}_u$ first and is rejected, either because ${UB}_u$ prefers ${CU}_k'$ to ${CU}_k$, or $({UB}_u, {CU}_k')$ is formed in the re-allocation stage. For the former, ${UB}_u$ does not have an incentive to match with ${CU}_k'$. For the latter, ${UB}_u$ does not have sufficient spectrum resource to serve ${CU}_k$, otherwise, the $({CU}_k, {UB}_u)$ pair has been formed in \textbf{$\mu_2$}. In conclusion, in any situation in which both ${CU}_k$ and ${UB}_u$ are matched, a blocking pair does not exist.
    \end{itemize}
Contradictions, as (${CU}_k$, ${UB}_u$) is any pair, thus, we could say that there is no blocking pair in matching \textbf{$\mu_1$}.

If there exists one blocking individual $CU_k$ or $UB_u$ in \textbf{$\mu_1$}:

for blocking individual $CU_k$:
\begin{itemize}
\item Case 1: In \textbf{$\mu_1$}, ${CU}_k$ is matched with ${UB}_u$, i.e., ${UB}_u$ is on ${CU}_k$'s preference list, then ${CU}_k$ does not have an incentive to be unmatched. In conclusion, in any situation in which both ${CU}_k$ is matched and blocking individual $CU_k$ does not exist.
    \end{itemize}
the proof that blocking individual $UB_u$ does not exist is similar to that blocking individual $CU_k$ does not exist.

In the above proof, blocking pair ($CU_k$, $UB_u$), blocking individual $CU_k$ or $UB_u$ can be  any pair or individual, thus, we could prove that there is no blocking pair or blocking individual in matching \textbf{$\mu_1$}.
\end{proof}
\begin{theorem}{Praeto optimality of \textbf{$\mu_2$}}.
In any instance of one-to-many matching, Praeto optimality is achieved by using
\ref{alg:step2} in \textbf{$\mu_2$}.
\end{theorem}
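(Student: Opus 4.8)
The plan is to argue by contradiction, exactly paralleling the Pareto-optimality proof of $\mu_1$ but now leaning on the stability of $\mu_2$ (\textbf{Theorem \ref{sta2}}) together with the greedy update rule and termination criterion of \ref{alg:step2}. Suppose $\mu_2$ were not Pareto optimal. Then by \textbf{Definition \ref{optimum}} there would exist a feasible rematching $\mu'$ under which at least one player attains a strictly higher EE while no player's EE decreases. I would first classify how such an improvement can be realised and then eliminate each possibility in turn.

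First I would dispatch the improvements that are expressible as a single deviation. If the beneficiary is an unmatched $CU_k$ that can be matched to some $UB_u$ so as to raise the EE of both $CU_k$ and $UB_u$ without harming anyone, then $(CU_k,UB_u)$ satisfies all three conditions of a blocking pair, contradicting \textbf{Theorem \ref{sta2}}. Likewise, if the beneficiary is a matched $CU_k$ that strictly prefers switching from $\mu_2(CU_k)=UB_{u'}$ to another band $UB_u$ with spare capacity, then $(CU_k,UB_u)$ is again a blocking pair, once more contradicting the stability of $\mu_2$. Thus no single-deviation Pareto improvement can survive.

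The remaining, and genuinely new, case is a \emph{multi-user reallocation}: $CU_k$ is admitted to $UB_u$ only after one or more incumbent CUs of $UB_u$ are displaced, those displaced CUs are re-accommodated on other bands so that their minimal throughput is still met, and the resulting system EE strictly increases. This is precisely the profitable reallocation that Step~2 of \ref{alg:step2} tests, via its three acceptance criteria (sufficient freed spectrum on $UB_u$, feasible re-accommodation of every displaced CU, and a net EE gain). Since \ref{alg:step2} does not terminate until every CU is either matched or has an empty preference list, and at each pass it commits the stored reallocation that raises the overall EE the most, any such profitable reallocation would have been detected and applied before termination. Hence its existence in $\mu_2$ contradicts the termination criterion of \ref{alg:step2}. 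Combining the single-deviation cases with the reallocation case, no Pareto-improving $\mu'$ exists, so $\mu_2$ is Pareto optimal; the symmetric argument with the roles of CUs and UBs interchanged completes the proof.

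The step I expect to be the main obstacle is the multi-user reallocation case, because ``no player is worse off'' there is not captured by the two-player blocking-pair notion: one must verify that the feasibility check of \ref{alg:step2} (re-accommodating every displaced CU while meeting its minimal throughput) coincides exactly with the requirement that no displaced CU loses EE, and that the exhaustive, EE-maximising update of \ref{alg:step2} genuinely enumerates all such reallocations rather than a restricted subset. Pinning down that the greedy ``most-improving'' selection cannot strand a profitable reallocation unexplored is the crux of the argument.
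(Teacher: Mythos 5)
Your first two cases are, in substance, the paper's entire proof: the paper argues only that an unmatched ${CU}_k$ that could profitably join some ${UB}_u$, or a matched ${CU}_k$ that would profitably switch bands, would form a blocking pair, contradicting the stability of $\mu_2$ (Theorem \ref{sta2}). Up to that point you are exactly on the paper's track. Where you diverge is your third case, the multi-user reallocation, which the paper silently skips --- after its two cases it simply asserts that ``it is impossible to increase the EE of a CU without decreasing that of the remaining CUs.'' Your instinct that this case is the genuinely new content for $\mu_2$ (since \ref{alg:step2} is precisely a reallocation procedure) is correct, and isolating it is to your credit.

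However, your argument for that case has a genuine gap, and it is the one you yourself flag. You claim that because \ref{alg:step2} terminates only when every CU is matched or has an empty preference list, any profitable reallocation ``would have been detected and applied before termination.'' That does not follow. In \ref{alg:step2}, an unmatched ${CU}_k$ removes ${UB}_u$ from its preference list the moment it proposes to it, and the profitability of that proposal is judged against the matching \emph{as it stands at that iteration}. A reallocation involving $({CU}_k, {UB}_u)$ that is unprofitable or infeasible at proposal time can become profitable later, after other accepted reallocations have moved incumbents around --- but ${UB}_u$ is no longer on ${CU}_k$'s list and is never revisited. Moreover, only \emph{free} CUs ever initiate proposals, so a Pareto improvement whose beneficiary is an already-matched CU and which requires displacing and re-accommodating several others is never examined by the algorithm at all. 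Hence ``the algorithm has terminated'' does not imply ``no profitable reallocation remains,'' and the contradiction you want does not materialize. Closing this would require either a monotonicity argument showing that late-emerging profitable reallocations cannot exist (the paper offers none), or weakening the claimed Pareto optimality to single-agent deviations --- which is all that stability, and hence both the paper's proof and your first two cases, actually delivers.
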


\begin{proof}
In stable matching \textbf{$\mu_1$}:
\begin{itemize}
\item Case 1: An unmatched ${CU}_k$ exists, which can be matched to ${UB}_u$ to increase the achievable EE of both ${CU}_k$ and ${UB}_u$, meaning that (${CU}_k$, ${UB}_u$) is the blocking pair of matching \textbf{$\mu_1$}, contracting \textbf{Theorem \ref{sta2}}.

\item Case 2: An existing a (${CU}_k$ exists, ${UB}_u$) pair. Obviously, ${CU}_k$ does not have an incentive to be unmatched; ${CU}_k$ has the incentive to change partner from ${UB}_u$ to ${UB}_{u'}$ to increase its achievable EE, meaning that (${UB}_u$, ${UB}_{u'}$) is a blocking pair of matching \textbf{$\mu_1$}, contracting \textbf{Theorem \ref{sta2}}.
    \end{itemize}
    It is impossible to increase the EE of a CU without decreasing that of the remaining CUs. The statement stands for UB, which can be proven similarly as above.
\end{proof}
\begin{theorem}{Complexity of \ref{alg:step2} (Convergence of \ref{alg:step2})}.
\label{complexity}
In any instance of many-to-one matching, a matching \textbf{$\mu_2$} can be obtained by using \ref{alg:step2} based on matching \textbf{$\mu_1$} within $\mathcal{O}(mU(K-m)(U-1))$ iterations, where $m$ is the number of unmatched CUs in \textbf{$\mu_1$}.
\end{theorem}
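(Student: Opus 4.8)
The plan is to follow the same accounting strategy used in the proof of Theorem~\ref{comSTEP1}, but adapted to the nested loop structure of Algorithm~\ref{alg:step2}. I would begin by fixing the notion of an ``iteration'' to be a single accepting/rejecting decision, exactly as for Algorithm~\ref{alg:step1}, and then observe that the only CUs driving Algorithm~\ref{alg:step2} are the $m$ CUs left unmatched by $\mu_1$; the $K-m$ CUs already matched in $\mu_1$ are never the proposers, although they may be displaced during the reallocation step. The total work is then the number of proposals made by these $m$ CUs, multiplied by the cost of servicing each proposal in the reallocation step.

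For the first factor I would bound the number of proposals. Since in Step~1 each free $CU_k$ proposes to its favourite remaining $UB_u$ and immediately removes that $UB_u$ from its preference list, no $CU_k$ proposes to the same UB twice; as a preference list contains at most $U$ entries, each of the $m$ unmatched CUs issues at most $U$ proposals over the whole run, giving at most $mU$ proposals. For the second factor I would bound the cost of servicing a single proposal in the reallocation Step~2: to accommodate a proposing $CU_k$ on $UB_u$, the algorithm may delete the CUs currently matched to $UB_u$ in the working matching---at most $K-m$ of them---and must then check whether each deleted CU can be re-accommodated on some other UB, for which it has at most $U-1$ alternatives. Hence each proposal costs at most $(K-m)(U-1)$ accepting/rejecting decisions, and multiplying the two bounds yields at most $mU\cdot(K-m)(U-1)$ decisions, i.e. $\mathcal{O}(mU(K-m)(U-1))$. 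I would close by noting termination via the criterion of Algorithm~\ref{alg:step2}: because every accepted reallocation strictly increases the system EE and preference lists only shrink, no configuration can recur, so the process halts within this bound.

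The main obstacle I anticipate is the accounting in the reallocation step rather than the proposal count, which is essentially identical to Theorem~\ref{comSTEP1}. Specifically, the delicate point is justifying that the displaced-CU reallocation attempts are correctly capped at $(K-m)(U-1)$ \emph{per proposal} without double-counting across proposals, and confirming that it is the strict-EE-increase condition (criterion~3 of Step~2) that rules out cycling, so that the product $mU\cdot(K-m)(U-1)$ is genuinely an upper bound on the total work and not merely on a single pass through Step~2.
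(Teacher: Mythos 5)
Your proposal is correct and follows essentially the same accounting as the paper's proof: at most $mU$ proposals from the $m$ unmatched CUs (each preference list shrinks after a proposal), and for each proposal at most $(K-m)$ displaced CUs each checking at most $(U-1)$ alternative UBs, giving the product $\mathcal{O}(mU(K-m)(U-1))$. Your added remark that the strict EE-increase condition prevents cycling is a small extra justification the paper leaves implicit, but it does not change the route of the argument.
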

\begin{proof}
At every step in \ref{alg:step2}, each one of $m$ unmatched proposes to favourite UB, such as $UB_u$, in its current preference list. The maximum number of CUs being matched to $UB_u$ in $mu_1$ is $(K-m)$. Then, the matched CUs of $UB_u$ will be deleted from $mu_1$ and re-matched to the rest of UBs in their preference lists. The maximum number of CUs that are deleted is $(K-m)$. For each deleted CU, the maximum number of UBs in its preference list is $(U-1)$. Thus the maximum number of accepting/rejecting decisions made is $(K-m)(U-1)$ for each proposal of an unmatched CU. As there $m$ unmatched CUs, the total number of accepting/rejecting decisions made is $(K-m)(U-1)*mU$.
\end{proof}

\section{Numerical Results and Analysis}
\subsection{Simulation Setting}
\setcounter{table}{1}
\begin{table}
  \begin{center}
  \caption{Parameters for LTE-U uplink EE optimization simulation}
    \label{Parameters}
    \begin{tabular}{ | l| l |}
   \hline
   Number of CUs & 6, 9, 12, 15, 18 and 21\\
   Network Radius & 100 m\\
CU Traffic Level ($TR^C$)& 10, 15, 20, 25, 30, 35 and 40 Mbps\\
WU Traffic Level ($TR^W$)& 20 Mbps\\
Unlicensed Spectrum & 5 GHz\\
UB Bandwidth & 20 MHz\\
CU Transmission Power& 20 mw\\
T&10 $\mu$ s\\
t&1 $\mu$ s\\
  Packet Size& 12800 bits\\
 MAC header & 272 bits\\

  PHY header& 128 bits\\

 ACK & 112 bits + PHY header  \\

 Wi-Fi \& LAA Bit Rate& 50 Mbit/s\\
  ${CW}_{initial}$&8\\
 Slot Time& 9 $\mu$s\\
 SIFS&16 $\mu$s\\
 DIFS& 34 $\mu$s\\

\hline
\end{tabular}
\end{center}
\end{table}
We perform a Monte Carlo simulation in a circle with a radius of 100m, with CUs randomly and uniformly distributed being served by a SCBS. The throughput requirements of Wi-Fi users and CUs are both random values between the range of [0, $TR^W$] and [0, $TR^C$], respectively. We evaluate the performance of the proposed algorithm in the network with the number of CUs. We assume the total number of UB to be 9. We set the slot time $T$ to be 10 $\mu$s, and the sub-frame duration $t$ to be 1 $\mu$s, which is much smaller than the channel coherence time. For each scenario with a certain network density and traffic load level, simulation is run 10,000 times. CUs are randomly located in the area of interest 100 times, and in each time channel fading is performed 100 times. All other parameters can be referred to in Table. \ref{Parameters}.

\begin{figure}[htbp]
\centering\includegraphics[width=0.5\textwidth]{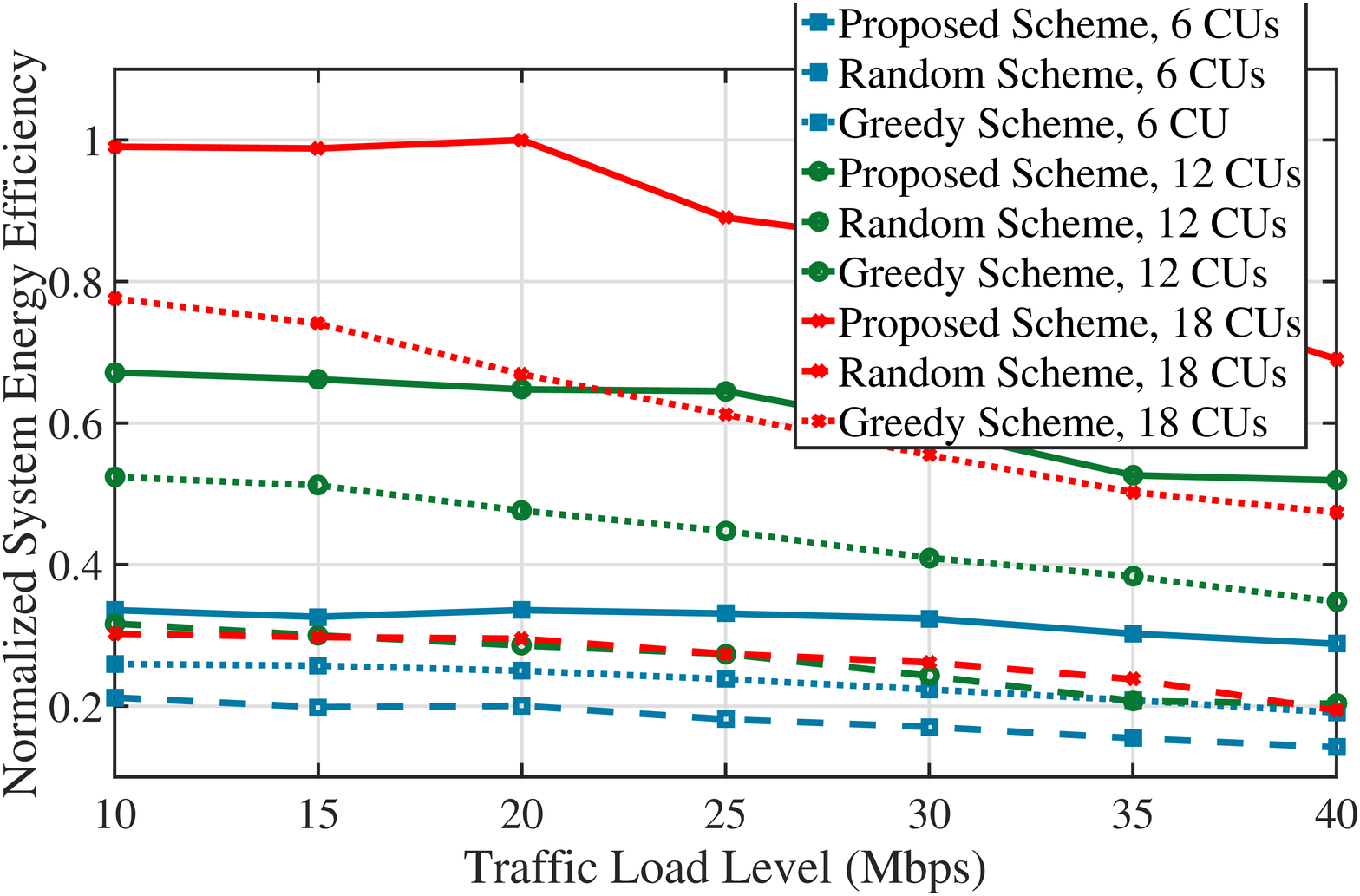}
\caption{System Energy Efficiency for Scenarios with Different Number of CUs
}\label{SEE}
\end{figure}
\subsection{Numerical Results}
\subsubsection{EE and Fairness Between CUs}
We first analyse the system EE obtained by the proposed matching-based scheme in scenarios with a different number of CUs and traffic load level in Fig. \ref{SEE}. Our proposed algorithm outperforms the greedy algorithm and random allocation under both low-density (6 CUs) and high-density networks (18 CUs) with a light traffic load from 10 Mbps per CU and heavy traffic load at 40 Mbps per CU. The system EE improves 30\% and 50\% obtained by our proposed method as compared with that obtained by the greedy algorithm, under the light and the heavy traffic load scenarios respectively. For the same number of CUs, with the increasing of traffic load per CU, the system EE decreases because more CUs remain unserved in the heavy traffic load scenario, as shown in Fig. \ref{SCU}. This is because more resources are occupied to serve a CU with a high traffic demand, leading to a drop in the number of CUs that can served in the network, i.e., more CUs fail to achieve their throughput requirement.

\begin{figure}[htbp]
\centering\includegraphics[width=0.5\textwidth]{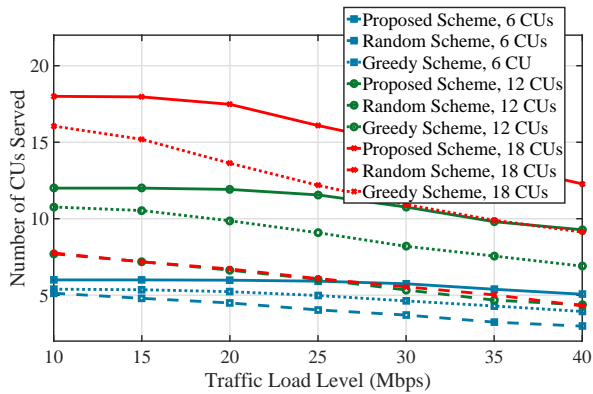}
\caption{The Number of CUs Served}
\label{SCU}
\end{figure}

\begin{figure}[htbp]
\centering\includegraphics[width=0.5\textwidth]{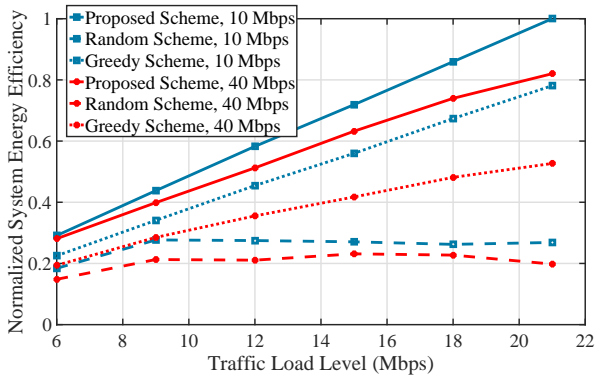}
\caption{System Energy Efficiency in Different Traffic Load Level}
\label{SEET}
\end{figure}
On the contrary, with the same traffic load level, more CUs tend to be served in the dense scenarios, leading to an increase of system EE as shown in Fig. \ref{SEET}. In dense scenario, more CUs have the chance to meet their throughput requirement, due to many factors, such as the distance between CU and SCBS and channel condition between CU and SCBS. Although the number of CUs served increases with the number of CUs in the network, except for the low traffic demand scenario, the percentage of CUs that have their throughput requirement fulfilled drops, as shown in Fig. \ref{CUP}. In a low traffic demand scenario, where the spectrum resource is sufficient to serve every CU with their required throughput demand, almost 100\% of CUs' being served rate is achieved by the proposed algorithm, compared with less than 90\% achieved by the greedy algorithm and the even lower served rate when using a random algorithm. In medium and high traffic demand scenario, the percentage of CUs served decreases with the increase of CUs in the network by using any one of the three algorithms. However, the proposed algorithm still outperforms the greedy algorithm and random algorithm by around 35\% and 50\%~120\%, respectively. Thus, we could say that the proposed algorithm works more effectively in CUs' fairness compared with the greedy algorithm or the random allocation scheme.

\begin{figure}[htbp]
\centering\includegraphics[width=0.5\textwidth]{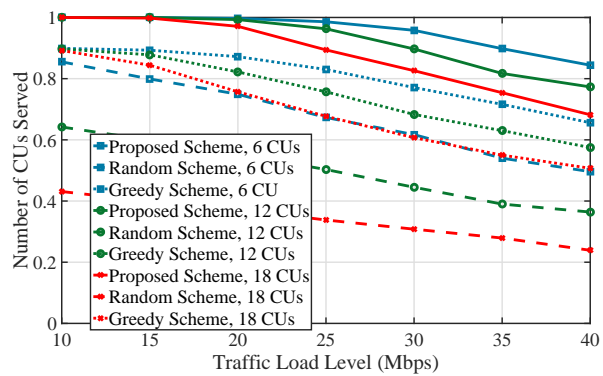}
\caption{The Percentage of CUs Served Comparison}
\label{CUP}
\end{figure}

\subsubsection{Throughput Analysis}
Throughput is another performance matrix for both the system and an individual CU. As shown in Fig. \ref{STP}, in the 6 CUs scenarios with  low traffic demand, three algorithms achieve similar results. This is because the unlicensed spectrum resource is sufficient to serve every CU with their relatively low traffic demands. In low traffic demand, system throughput increases with the number of CUs almost linearly as shown by using the proposed algorithm and the greedy algorithm, because the spectrum resource is still sufficient. The proposed algorithm outperforms the greedy algorithm. However, there is another aspect in heavy traffic load. In the network with 6 CUs, the proposed algorithm achieves 66\% more than the greedy algorithm, and more than 100\% more than the random scheme. With the increase of the number of CUs in the network, the overall throughput achieved by using the proposed algorithm tends to saturate in heavy traffic load scenarios. This is because the capacity is limited by the available unlicensed spectrum resources.

\begin{figure}[htbp]
\centering\includegraphics[width=0.5\textwidth]{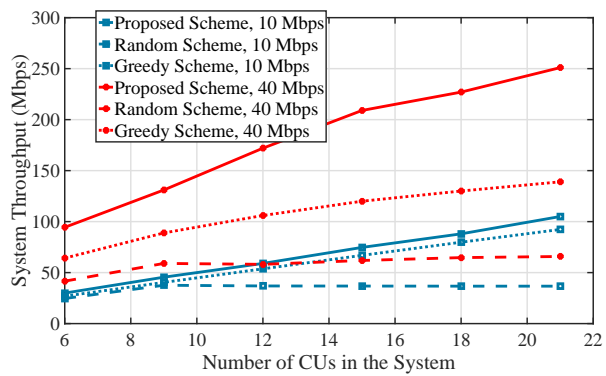}
\caption{System Throughput In Different Traffic load Level}
\label{STP}
\end{figure}

\subsubsection{Computational Complexity}

The theoretical upper bound of the computation complexity of \ref{alg:step1} and \ref{alg:step2} have been given in \textbf{Theorem \ref{comSTEP1}}, and \textbf{Theorem \ref{complexity}}. Here we show the actual computation complexity of the proposed algorithm in typical traffic load scenarios in Fig. \ref{fig:complexity}.

There are positive correlations between the complexity and network density at the same traffic load level. Specifically, at the lowest traffic load (10 Mbps), complexity is slightly more than the number of CUs in the network. This means that almost all the CUs' first proposal are accepted, due to the low traffic demand of each CU. In a low traffic case, most CUs are matched by using $\ref{alg:step1}$; $\ref{alg:step2}$ is seldom performed. The complexity increases with the traffic load level from 10 to 30 Mbps. This is because with the increase of traffic load level, increasing CUs are unmatched in \textbf{$\mu_1$} by using $\ref{alg:step1}$; the number of iterations that $\ref{alg:step2}$ performs is increasing. The complexity of an iteration in $\ref{alg:step2}$ ($\mathcal{O}((K-m)(U-1))$) is much larger than that in $\ref{alg:step1}$ ($\mathcal{O}(U)$), leading to an increase of complexity. At an even higher traffic load level, the complexity begins to drop. At this stage, the number of UBs in a CU's preference lists is much smaller than that in a medium traffic load level. The complexity of obtaining matching \textbf{$\mu_1$} is much smaller. Although the number of unmatched CUs rises in the scenario with the same network density, elements in their preference lists are much smaller, the complexity in an iteration drops significantly, leading to the decrease of computational complexity at a high traffic load level.

\begin{figure}[htbp]
\centering\includegraphics[width=0.5\textwidth]{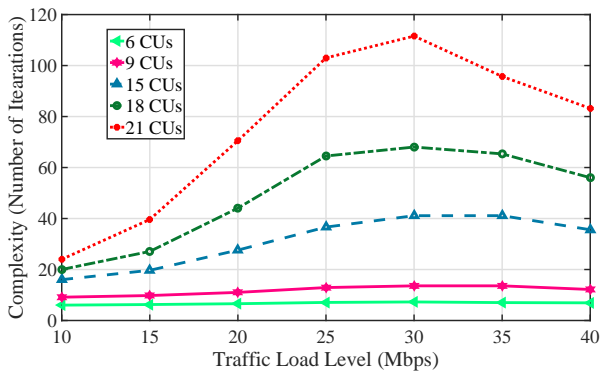}
\caption{Computational Complexity in Different Scenario}
\label{fig:complexity}
\end{figure}

\section{Conclusion}

In this work, we have studied the uplink resource allocation problem in a LTE-U and Wi-Fi coexistence scenario to maximize each CU's EE. We formulated the problem as a multi-objective optimization, and transformed it into a single-objective optimization by using the weighted-sum method. We proposed a semi-distributed 2-step matching with partial information based algorithm to solve the problem. Compared with the greedy algorithm based resource allocation scheme, our proposed scheme achieves improvements of up to $50\%$ in terms of EE and up to $66\%$ in terms of throughput. Furthermore, we have analysed the computational complexity of the proposed algorithm theoretically and by simulations, thereby showing the complexity is reasonable for real-world deployment.

In the future, work will be extended into the heterogeneous LTE-U networks, where hyper-dense deployment of LTE-U cells may exist. We will also consider a comprehensive optimized resource allocation scheme for LTE-U taking into account that CU can choose between licensed and unlicensed bands. In such scenarios, a multi-side matching model should be considered, which poses new challenges in achieving the solutions.
\section*{Acknowledgment}

This paper acknowledges the support of the MOST of China for the "Small Cell and Heterogeneous Network Planning and Deployment" project under grant No. 2015DFE12820,  and H2020 DECADE project.
\bibliographystyle{IEEEtran}
\bibliography{Myreference}

\end{document}